\title{A Quantum Information Theoretic Approach to Tractable Probabilistic Models}
\author{Pedro Zuidberg Dos Martires}
\affil{%
	Örebro University\\
	Sweden
}
\theoremstyle{plain}
\newtheorem{theorem}{Theorem}[section]
\newtheorem{proposition}[theorem]{Proposition}
\newtheorem{corollary}[theorem]{Corollary}
\theoremstyle{definition}
\newtheorem{definition}[theorem]{Definition}
\theoremstyle{remark}
\pgfplotsset{compat=1.18}
\newcommand{\cf}{cf.\xspace}
\newcommand{\eg}{e.g.\xspace}
\newcommand{\ie}{i.e.\xspace}
\theoremstyle{theorem}
\newtheorem{conjecture}[theorem]{Conjecture}
\newenvironment{talign}
{\align}
{\endalign}
\newcommand{\circuit}{\ensuremath{\xi}}
\newcommand{\pcircuit}{\ensuremath{p}}
\newcommand{\Pcircuit}{\ensuremath{P}}
\newcommand{\ocircuit}{\ensuremath{o}}
\newcommand{\Ocircuit}{\ensuremath{O}}
\newcommand{\Vcircuit}{\ensuremath{V}}
\newcommand{\Qcircuit}{\ensuremath{Q}}
\newcommand{\tildeQcircuit}{\ensuremath{\widetilde{Q}}}
\newcommand{\qcircuit}{\ensuremath{q}}
\newcommand{\qop}{\ensuremath{\Phi}}
\newcommand{\kraus}{\ensuremath{K}}
\newcommand{\diagmat}{\ensuremath{\text{diagmat}}}
\newcommand{\diagvec}{\ensuremath{\text{diagvec}}}
\newcommand{\inputs}{\ensuremath{\text{in}}}
\newcommand{\scope}{\ensuremath{\phi}}
\newcommand{\Xvars}{\ensuremath{\mathbf{X}}}
\newcommand{\xvars}{\ensuremath{\mathbf{x}}}
\newcommand{\Xvar}{\ensuremath{X}}
\newcommand{\xvar}{\ensuremath{x}}
\newcommand{\Yvars}{\ensuremath{\mathbf{Y}}}
\newcommand{\yvars}{\ensuremath{\mathbf{y}}}
\newcommand{\weight}{\ensuremath{w}}
\newcommand{\numevents}{I}
\newcommand{\numvar}{N}
\newcommand{\numbond}{B}
\newcommand{\subcompletemeasrure}{M}
\newcommand{\msocs}{$\mu$SOCS\xspace}
\newcommand{\msocss}{$\mu$SOCSs\xspace}
\newcommand{\midlinewidth}{1.0pt}
\newcommand{\middist}{24pt}
\newcommand{\smalldist}{20pt}
\newcommand{\halfdist}{4pt}
\renewenvironment{proof}[1][\proofname]{\par
	\pushQED{\qed}%
	\normalfont\topsep0pt \partopsep0pt 
	\trivlist
	\item[\hskip\labelsep
	            \itshape
	            #1\@addpunct{.}]\ignorespaces
}{%
	\popQED\endtrivlist\@endpefalse
	\vskip 1ex  
}
\newcommand{\punc}{PUnC\xspace}
\newcommand{\dpunc}{D-PUnC\xspace}
\newcommand{\sdpunc}{SD-PUnC\xspace}
\newcommand{\puncs}{PUnCs\xspace}
\newcommand{\dpuncs}{D-PUnCs\xspace}
\newcommand{\sdpuncs}{SD-PUnCs\xspace}
\newcommand{\noisepunc}{NoisePUnC\xspace}
\newcommand{\noisepuncs}{NoisePUnCs\xspace}
\begin{document}
\maketitle

\begin{abstract}
	By recursively nesting sums and products, probabilistic circuits have emerged in recent years as an attractive class of generative models as they enjoy, for instance, polytime marginalization of random variables.
	In this work we study these machine learning models using the framework of quantum information theory, leading to the introduction of \textit{positive unital circuits} (\puncs),
	which generalize circuit evaluations over positive real-valued probabilities to circuit evaluations over positive semi-definite matrices.
	As a consequence, \puncs strictly generalize probabilistic circuits as well as recently introduced circuit classes such as PSD circuits.
\end{abstract}

\section{Introduction}

Probabilistic circuits (PCs)~\citep{darwiche2003differential,poon2011sum} belong to an unusual class of probabilistic models: they are highly expressive but at the same time also tractable.
For instance, so-called decomposable probabilistic circuits~\citep{darwiche2001decomposable} encode probability distributions using nested sums and products over positive real-valued numbers and allow for the computation of marginals in time polynomial in the size of the circuit.
\citet{zhang2020relationship} noted that it is exactly this restriction to positive values that limits the expressive efficiency (or succinctness) of PCs~\citep{martens2014expressive,decolnet2021compilation}. In particular, the positivity constraint on the set of elements that PCs operate on prevents them from modelling negative correlations between variables.

Circuits that are incapable of modelling negative correlations, \ie circuits that can only combine probabilities in an additive fashion, are also called monotone circuits~\citep{shpilka2010arithmetic}.
This restricted expressiveness can be combatted by the use of so-called \textit{non-monotone} circuits, where subtractions are allowed as a third operation (besides sums and products). Interestingly, \citet{valiant1979negation} showed that a mere single subtraction can render non-monotone circuits exponentially more expressive than monotone circuits -- a result that has recently been refined for a subclass of decomposable circuits~\citep{loconte2025sum}.

As shown by \citet{harviainen2023inference} and \citet{agarwalprobabilistic}, non-monotone circuits do, however, introduce an important complication: if non-monotone circuits are not designed carefully, verifying whether a circuit encodes a valid probability distribution or not is an NP-hard problem. This does also render learning the parameters of a circuit practically infeasible.

Using the concept of \textit{positive operator valued measures} from quantum information theory, which encode random events as positive semi-definite matrices, we are able to devise non-monotone circuits that nonetheless encode proper (normalized) probability distributions by construction.
Our approach extends a line of recent works presented in the circuit literature ~\citep{sladek2023encoding,loconte2024subtractive,wangrelationship,loconte2025sum}. However, our work is the first that establishes this deep connection between concepts in quantum information theory and tractable probabilistic models.
Furthermore, the non-monotone circuits that we introduce generalize probabilistic circuits and PSD circuits~~\citep{sladek2023encoding,loconte2024subtractive,loconte2025sum}\footnote{PSD circuits were later on rebranded as sum of compatible squares circuits~\citep{loconte2025sum}}.

The remainder of the paper is structured as follows. We introduce in Section \ref{sec:qit} the necessary concepts from quantum information theory. In Section \ref{sec:puncs} we then use these concepts to construct tractable probability distributions using positive operator circuits. In Section \ref{sec:special_cases} we impose specific restrictions on the functional form of the computation units in \puncs and show how these restrictions lead to circuit classes known in the literature, \eg probabilistic circuits in Section \ref{sec:diagcircuits}.

In Section~\ref{sec:nsdnmcircuit} we then drop the so-called property of \textit{structured decomposability}, which has been imposed so far on all non-monotone circuit models. As such, we introduce the first circuit model that is non-monotone and only adheres to the weaker property of \textit{decomposability}. We discuss related work in Section~\ref{sec:related} and end the paper with concluding remarks in Section~\ref{sec:conclusions}.



\section{A Primer on Quantum Information Theory}
\label{sec:qit}

A widely used and elegant framework to describe measurements of quantum systems is the so-called \textit{positive operator-valued measure} (POVM) formalism. While POVMs have physical interpretations in terms of quantum information and quantum statistics, we will only be interested in their mathematical properties as we use them to show that  circuits (defined in Section~\ref{sec:puncs}) form valid probability distributions.
We refer the reader to \citep{nielsen2001quantum} for an in-depth exposition on the topic, as well as quantum computing and quantum information theory in general.
\begin{definition}[Positive Semidefinite]
	A $\numbond {\times} \numbond$ Hermitian matrix $H$ is called positive semi-definite (PSD) if and only if $\forall \xvars {\in}  \mathbb {C} ^{\numbond}: \xvars^{*} H \xvars {\geq} 0$, where $\xvars^{*}$ denotes the conjugate transpose and $\mathbb {C}^{\numbond}$ the $\numbond$-dimensional space of complex numbers.
\end{definition}
\begin{definition}[{POVM~\citep[Page 90]{nielsen2001quantum}}]
	\label{def:povm}
	A positive operator-valued measure
	is a set of PSD  matrices $\{E(i)\}_{i=0}^{\numevents-1}$ ($I$ being the number of possible measurement outcomes) that sum to the identity:
	\begin{talign}
		\sum_{i=0}^{\numevents-1} E(i) = \mathbb{1},
		\label{eq:povm_normalized}
	\end{talign}
\end{definition}
Before defining the probability of a specific $i$ occurring, we need the notion of a density matrix~\citep{neumann1927wahrscheinlich,landau1927dampfungsproblem}:
\begin{definition}[{Density Matrix~\citep[Page 102]{nielsen2001quantum}}]
	\label{def:density_matrix}
	A density matrix $\rho$ is a PSD matrix of trace one, \ie $\Tr [\rho]=1$.
\end{definition}
\begin{definition}[{Event Probability~\citep[Page 102]{nielsen2001quantum}}]
	\label{def:eventprob}
	Let $\rho$ be a density matrix and let $i$ denote an event with $E(i)$ being the corresponding element from the POVM. The probability of the event $i$ happening, \ie measuring the outcome $i$, is given by
	\begin{talign}
		p(i) = \Tr [ \rho E(i)]
		\label{eq:povm_prob}
	\end{talign}
\end{definition}


\begin{restatable}{proposition}{proppovmprob}
	\label{prop:povmprob}
	The expression in Equation~\ref{eq:povm_prob} defines a valid probability distribution.
\end{restatable}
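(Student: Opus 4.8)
The plan is to verify the two defining properties of a (discrete) probability distribution for the numbers $p(i) = \Tr[\rho E(i)]$, $i \in \{0,\dots,\numevents-1\}$: non-negativity, $p(i) \geq 0$, and normalization, $\sum_{i=0}^{\numevents-1} p(i) = 1$.

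First I would dispatch normalization, which is the easy half. Using linearity of the trace together with the POVM condition in Equation~\ref{eq:povm_normalized} and the trace-one property of the density matrix from Definition~\ref{def:density_matrix}, one has
\begin{talign}
	\sum_{i=0}^{\numevents-1} p(i) = \sum_{i=0}^{\numevents-1} \Tr[\rho E(i)] = \Tr\!\left[\rho \sum_{i=0}^{\numevents-1} E(i)\right] = \Tr[\rho\, \mathbb{1}] = \Tr[\rho] = 1 .
\end{talign}

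The slightly more substantial step is non-negativity, i.e.\ showing that the trace of the product of two PSD matrices is non-negative. Here I would use that $\rho$, being PSD, admits a PSD square root $\rho^{1/2}$ with $\rho = \rho^{1/2}\rho^{1/2}$. Then by cyclicity of the trace, $p(i) = \Tr[\rho^{1/2} E(i) \rho^{1/2}]$, and the matrix $\rho^{1/2} E(i)\rho^{1/2}$ is itself PSD (for any $\xvars \in \mathbb{C}^{\numbond}$, $\xvars^{*}\rho^{1/2}E(i)\rho^{1/2}\xvars = (\rho^{1/2}\xvars)^{*} E(i)(\rho^{1/2}\xvars) \geq 0$ since $E(i)$ is PSD). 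A PSD matrix has non-negative eigenvalues, so its trace is non-negative, giving $p(i) \geq 0$. An equivalent route, if one prefers to avoid matrix square roots, is to spectrally decompose $\rho = \sum_j \lambda_j v_j v_j^{*}$ with $\lambda_j \geq 0$ and compute $p(i) = \sum_j \lambda_j\, v_j^{*} E(i) v_j \geq 0$ termwise.

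I expect the only real subtlety to be justifying that $\Tr[\rho E(i)] \geq 0$ cleanly — i.e.\ recalling the right lemma about products of PSD matrices and phrasing it with the cyclic/square-root trick — while the normalization and the overall bookkeeping are routine. Putting the two properties together establishes that $\{p(i)\}_{i=0}^{\numevents-1}$ is a valid probability distribution over the $\numevents$ outcomes.
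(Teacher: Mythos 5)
Your proof is correct and follows essentially the same route as the paper: normalization via linearity of the trace and the POVM completeness condition, and non-negativity by factoring one of the two PSD matrices, using cyclicity of the trace to form a conjugated (hence PSD) matrix, and noting that a PSD matrix has non-negative trace. The only cosmetic difference is that you factor $\rho = \rho^{1/2}\rho^{1/2}$ and sandwich $E(i)$, whereas the paper factors $E(i) = D(i)D^{*}(i)$ and sandwiches $\rho$ --- a symmetric variant of the identical argument.
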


\begin{proof}
	While this is a well-known result we were not able to identify a concise proof in the literature. We therefore provide one in Appendix~\ref{sec:proof:prop:povmprob}.
\end{proof}

Given that the $E(i)$'s completely describe the event $i$ such that its event probability can be computed, they represent the quantum state of a system. This quantum state (represented by a matrix) lives in a certain Hilbert space. The changes that a quantum state can undergo are then described by so-called \textit{quantum operations} acting on the Hilbert space. We can construct such operations using Kraus' theorem.

\begin{theorem}[Kraus' Theorem~\citep{kraus1983states}]
	Let $\mathcal{H}$  and $\mathcal {G}$ be Hilbert spaces of dimension $N$ and $M$ respectively, and $\qop$ be a quantum operation between $\mathcal{H}$  and $\mathcal {G}$. Then, there are matrices
	$\{ \kraus_j \}_{j=1}^{D}$ (with $D\leq NM$)
	mapping $\mathcal{H}$ to $\mathcal {G}$ such that for any state $E(i)$
	\begin{talign}
		\qop(E(i)) = \sum_{j=1}^{D} \kraus_j E(i) \kraus_j^*
		\label{eq:def:kraus}
	\end{talign}
	provided that $ \sum _{j} \kraus_{j}^{*} \kraus_{j}\leq \mathbb {1} $ (in the Loewner order sense).
\end{theorem}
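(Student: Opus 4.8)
The plan is to prove Kraus' theorem through the Choi--Jamio{\l}kowski isomorphism, which collapses the universally quantified statement ``for any state $E(i)$'' into a single positive semi-definite matrix. Throughout I take a \emph{quantum operation}, following~\citep{nielsen2001quantum}, to be a linear map $\qop$ that is completely positive and trace-non-increasing. Fixing orthonormal bases $\{\ket{k}\}_{k=1}^{N}$ of $\mathcal{H}$ and $\{\ket{a}\}_{a=1}^{M}$ of $\mathcal{G}$, I would introduce the Choi matrix
\begin{talign}
	J = \sum_{k,l=1}^{N} \qop(\ket{k}\bra{l}) \otimes \ket{k}\bra{l} \in \mathcal{G} \otimes \mathcal{H},
\end{talign}
an $NM \times NM$ matrix that records the action of $\qop$ on every basis dyad of $\mathcal{H}$.

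The first substantive step is to establish $J \succeq 0$. Writing $\ket{\Omega} = \sum_{k}\ket{k}\otimes\ket{k} \in \mathcal{H}\otimes\mathcal{H}$ for the (unnormalized) maximally entangled vector, one checks directly that $J = (\qop \otimes \mathrm{id}_{\mathcal{H}})(\ket{\Omega}\bra{\Omega})$. Since $\ket{\Omega}\bra{\Omega}$ is PSD and $\qop$ is completely positive, the map $\qop \otimes \mathrm{id}_{\mathcal{H}}$ preserves positivity, so $J \succeq 0$. I would then take a rank-one decomposition $J = \sum_{j=1}^{D} \ket{w_j}\bra{w_j}$ (for instance the spectral decomposition), where $D = \operatorname{rank}(J) \leq NM$; this already delivers the stated bound on the number of Kraus operators. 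Each $\ket{w_j} \in \mathcal{G}\otimes\mathcal{H}$ is reshaped into an $M \times N$ matrix $\kraus_j : \mathcal{H} \to \mathcal{G}$ through the vectorization correspondence $\ket{w_j} = \sum_{a,k}(\kraus_j)_{ak}\,\ket{a}\otimes\ket{k}$, equivalently $(\kraus_j \otimes \mathbb{1})\ket{\Omega} = \ket{w_j}$.

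It remains to recover the operator-sum form. Substituting the reshaped vectors into $J$ and matching the $\ket{k}\bra{l}$ tensor component on each side of the decomposition yields $\qop(\ket{k}\bra{l}) = \sum_{j} \kraus_j \ket{k}\bra{l} \kraus_j^{*}$ for every basis dyad; linearity of $\qop$ then extends this to $\qop(E(i)) = \sum_{j} \kraus_j E(i) \kraus_j^{*}$ for an arbitrary state, which is Equation~\ref{eq:def:kraus}. For the Loewner bound I would use trace-non-increase: since $\Tr[\qop(X)] = \Tr[X \sum_{j}\kraus_j^{*}\kraus_j]$ must satisfy $\Tr[\qop(X)] \leq \Tr[X]$ for every PSD $X$, letting $X$ range over all PSD matrices forces $\sum_{j}\kraus_j^{*}\kraus_j \leq \mathbb{1}$.

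The conceptual crux, and the step I expect to require the most care, is the positivity argument for $J$: ordinary positivity of $\qop$ does not suffice, because $\ket{\Omega}\bra{\Omega}$ is entangled and hence not a convex combination of product states. It is precisely \emph{complete} positivity --- positivity of $\qop \otimes \mathrm{id}_{\mathcal{H}}$ on the $N$-dimensional ancilla --- that is needed to conclude $J \succeq 0$, and getting this dependence on the ancilla dimension right is the heart of the proof. The subsequent reshaping and component-matching are then essentially bookkeeping with the vectorization identities.
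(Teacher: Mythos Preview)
Your argument via the Choi--Jamio{\l}kowski isomorphism is correct, and the paper does not give its own proof here: it simply cites \citep[Chapter 8]{nielsen2001quantum}. That reference proves the operator-sum representation by exactly the construction you outline---introducing an ancilla, forming the maximally entangled state, applying $\qop\otimes\mathrm{id}$, and reading off the Kraus operators from a rank-one decomposition of the resulting PSD matrix---so your approach coincides with the cited one.
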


\begin{proof}
	See \citep[Chapter 8]{nielsen2001quantum}
\end{proof}
The $\kraus_j$ matrices are usually referred to as Kraus operators.



\section{Positive Unital Circuits}
\label{sec:puncs}

A popular subclass of probabilistic circuits are so-called structured decomposable  probabilistic circuits~\citep{darwiche2011sdd} that are also smooth~\citep{darwiche2001tractable}. The advantage of this circuit subclass is that they can be implemented in a rather straightforward fashion on modern AI accelerators, as demonstrated by \citet{peharz2019random,peharz2020einsum}.
For the sake of exposition, we will limit ourselves in a first instance to such circuits that adhere to structured decomposability and will generalize to (non-structured) decomposable circuits in Section~\ref{sec:nsdnmcircuit}. For a detailed account on these different circuit properties we refer the reader to~\citep{vergari2021compositional}.

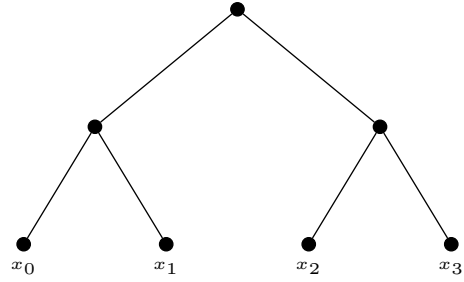
\begin{figure}[t]
	\centering
	\resizebox{0.75\linewidth}{!}{
	\begin{tikzpicture}[
			dot/.style = {circle, draw, fill, minimum size=4pt, inner sep=0pt},
			level distance=1.25cm,
			level 1/.style={sibling distance=3cm},
			level 2/.style={sibling distance=1.5cm},
			scale=1
		]

		\node[dot] (00) {}
		child {node[dot] (10) {}
				child {node[dot] (20) {}}
				child {node[dot] (21) {}}
			}
		child {node[dot] (11) {}
				child {node[dot] (22) {}}
				child {node[dot] (23) {}}
			};

		\node[below= 0.1 of 20, inner sep=0pt] (X1) {\tiny{$x_0$}};
		\node[below= 0.1 of 21, inner sep=0pt] (x2) {\tiny{$x_1$}};
		\node[below= 0.1 of 22, inner sep=0pt] (x3) {\tiny{$x_2$}};
		\node[below= 0.1 of 23, inner sep=0pt] (x4) {\tiny{$x_3$}};
	\end{tikzpicture}
}

	\caption{
		Partition circuit over four binary variables $x_i$ with $i \in \{0,1,2,3\}$, which are given as inputs to the circuit. The internal nodes of the partition circuit correspond the computation units.}
	\label{fig:circuit}
\end{figure}

\citet{zuidberg2024probabilistic} introduced an abstraction for these smooth structured decomposable circuits in the form of partition trees.
We further refine this by introducing the concept of a \textit{partition circuit}.
We give such a circuit in Figure~\ref{fig:circuit}.
Note, the concept of a partition tree, and hence a partition circuit, is related to the concept of a variable tree~\citep{pipatsrisawat2008new}. However, partition circuits emphasize an interpretation as computation graphs, unlike variable trees.

\begin{definition}[Partition Circuit]
	\label{def:partition_circuit}
	A partition circuit over a set of variables is a parametrized computation graph taking the form of a binary tree. The partition circuit consists of two kinds of computation units:
	\textit{leaf} and \textit{internal} units (including a single \textit{root}).
	Units at the same distance from the root form a layer.
	Furthermore, let $\circuit_k$ denote the root unit or an internal unit. The unit $\circuit_k$ then receives its inputs from two units in the previous layer, which we denote by $\circuit_{k_l}$ and $\circuit_{k_r}$. Each computation unit is input to exactly one other unit, except the root unit, which is the input to no other unit.
\end{definition}

\subsection{Positive Operator Circuits}

Using the concept of partition circuits we construct positive operator circuits. Positive operator circuits can be thought of as generalizing circuit evaluations with probabilities to circuit evaluations with PSD matrices.
\begin{definition}[Positive Operator Circuit (Partition Circuit)]
	\label{def:poc}
	Let   $\xvars{=}\{\xvar_0,\dots ,\xvar_{\numvar{-}1}  \}$ be a set of discrete variables.
	We define an operator circuit as a partition circuit whose computation units take the following functional form:
	\begin{align}
		 & {\Ocircuit}_k(\xvars_k){=}
		\begin{cases}
			E_{\xvar_k}
			 & \text{if $k$ is leaf}
			\\
			\qop_k \Bigl(\Ocircuit_{k_l}(\xvars_{k_l}) \otimes \Ocircuit_{k_r}(\xvars_{k_r}) \Bigr)
			 & \text{else},
		\end{cases}
		\label{eq:poc:def}
	\end{align}
	where the  $E_{\xvar_k}$'s are quantum state matrices, and where the $\qop_k$ are quantum operations.
\end{definition}
Note that using the Kronecker product between $\Ocircuit_{k_l}(\xvars_{k_l})$ and  $\Ocircuit_{k_r}(\xvars_{k_r})$ is a sensible choice as it describes the joint state of both subsystems.


\begin{restatable}{proposition}{proppocpsd}
	\label{prop:pocpsd}
	Positive operator circuits are PSD.
\end{restatable}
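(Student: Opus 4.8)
The plan is to proceed by structural induction over the partition circuit underlying the positive operator circuit, following the recursive definition in Equation~\ref{eq:poc:def}. First I would isolate two elementary facts about PSD matrices. Fact (i): the Kronecker product $A \otimes B$ of two PSD matrices is again PSD --- it is Hermitian because conjugate transposition distributes over $\otimes$, and its spectrum consists of the pairwise products of the (nonnegative) eigenvalues of $A$ and $B$. Fact (ii): for any matrix $\kraus$ and any PSD matrix $A$, the congruence $\kraus A \kraus^{*}$ is PSD, since for all $\xvars$ we have $\xvars^{*} \kraus A \kraus^{*} \xvars = (\kraus^{*}\xvars)^{*} A (\kraus^{*}\xvars) \geq 0$, and it is Hermitian; consequently any finite sum $\sum_j \kraus_j A \kraus_j^{*}$ of such terms is PSD.

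For the base case, a leaf unit evaluates to $E_{\xvar_k}$, which by Definition~\ref{def:poc} is a quantum state matrix and hence PSD (density matrices and POVM elements are PSD by Definitions~\ref{def:density_matrix} and~\ref{def:povm}). For the inductive step, let $\Ocircuit_k$ be an internal unit with children $\Ocircuit_{k_l}$ and $\Ocircuit_{k_r}$, and assume as induction hypothesis that $\Ocircuit_{k_l}(\xvars_{k_l})$ and $\Ocircuit_{k_r}(\xvars_{k_r})$ are PSD. By Fact~(i), $\Ocircuit_{k_l}(\xvars_{k_l}) \otimes \Ocircuit_{k_r}(\xvars_{k_r})$ is PSD. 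Since $\qop_k$ is a quantum operation, Kraus' theorem yields Kraus operators $\{\kraus_j\}_{j=1}^{D}$ with $\qop_k(A) = \sum_{j=1}^{D} \kraus_j A \kraus_j^{*}$; applying Fact~(ii) with $A = \Ocircuit_{k_l}(\xvars_{k_l}) \otimes \Ocircuit_{k_r}(\xvars_{k_r})$ shows that $\Ocircuit_k(\xvars_k) = \qop_k(A)$ is PSD. Because the circuit is a finite binary tree, the induction terminates and every unit, in particular the root, evaluates to a PSD matrix.

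I do not anticipate a real obstacle; the proof is essentially a bookkeeping argument, and the only points requiring mild care are tracking the Hermiticity half of the PSD property through both the Kronecker product and the Kraus sum (which is automatic, as noted above), and observing that the normalization side-conditions on quantum states and operations (the trace-one constraint and $\sum_j \kraus_j^{*}\kraus_j \leq \mathbb{1}$) play no role in establishing positivity --- they will only be needed later when arguing that the circuit induces a properly normalized probability distribution.
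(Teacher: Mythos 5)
Your proof is correct and follows essentially the same route as the paper's: leaves are PSD because they are quantum states, the Kronecker product of PSD matrices is PSD, and the internal quantum operations preserve positive semi-definiteness. You simply make explicit (via the Kraus representation and the congruence argument) the step the paper leaves implicit, namely why a quantum operation maps PSD matrices to PSD matrices.
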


\begin{proof}
	We know that all the leaves carry PSD matrices as they describe quantum states. Passing these on recursively to the quantum operations in the internal units retains the positive semi-definiteness as the Kronecker product between two PSD matrices is again PSD.
\end{proof}

\subsection{Constructing a Probability Distribution}

In Section~\ref{sec:qit} we saw that we can construct a probability distribution using a density matrix $\rho$ and a positive operator-valued measure, with the latter being a set of PSD matrices (\cf Definition~\ref{def:povm}) that sum to the unit matrix. Using a positive operator circuit $\Ocircuit(\xvars)$ we indeed have a set of PSD matrices. Namely, one for each instantiation of the $\xvars$ variables. We now introduce \textit{positive unital (operator) circuits} (\puncs) for which also the summation to the unit matrix holds.

\begin{definition}
	\label{def:cp_unital}
	We call a quantum operation \textit{unital} if
	\begin{align}
		\qop_k(\mathbb{1}_{k_l} \otimes \mathbb{1}_{k_r})
		=\qop_k(\mathbb{1}_{k_l k_r})
		=  \mathbb{1}_k,
	\end{align}
	where $\mathbb{1}_{k}$, $\mathbb{1}_{k_l}$, $\mathbb{1}_{k_l k_r}$, and $\mathbb{1}_{k_r}$  denote unit matrices of appropriate size,
\end{definition}

\begin{restatable}{proposition}{propqopunital}
	\label{prop:qopunital}
	Unital quantum operations are valid in the sense that the inequality $\sum_j \kraus_{j}^* \kraus_{j} \leq \mathbb{1}$ holds for all unital quantum operations.
\end{restatable}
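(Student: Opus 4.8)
The plan is to unwind the unitality condition through Kraus' theorem and then read off the validity inequality. First I would apply Kraus' theorem to the unital quantum operation $\qop_k$ to obtain Kraus operators $\{\kraus_j\}$ with $\qop_k(E)=\sum_j \kraus_j E \kraus_j^*$, where each $\kraus_j$ maps the input space to the output space. Evaluating the identity of Definition~\ref{def:cp_unital} at $E=\mathbb{1}_{k_l}\otimes\mathbb{1}_{k_r}$ gives
\begin{align*}
  \sum_j \kraus_j\kraus_j^* \;=\; \qop_k(\mathbb{1}_{k_l}\otimes\mathbb{1}_{k_r}) \;=\; \mathbb{1}_k ,
\end{align*}
so the ``output-side'' Gram sum of the Kraus operators is exactly the identity on the output space. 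Taking traces turns this into $\Tr\big[\sum_j \kraus_j^*\kraus_j\big] = \sum_j \Tr[\kraus_j\kraus_j^*] = \Tr[\mathbb{1}_k]$, which controls the trace of the PSD matrix $\sum_j \kraus_j^*\kraus_j$ that appears in the validity condition.

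It then remains to upgrade this to the Loewner inequality $\sum_j \kraus_j^*\kraus_j \le \mathbb{1}$, and this is the step I expect to be the crux: $\kraus_j\kraus_j^*$ and $\kraus_j^*\kraus_j$ agree only on their individual nonzero spectra, and this agreement is not preserved under summation, so the trace identity alone does not suffice. The clean way to close the gap is to use that $\qop_k$ is, before anything else, a quantum operation, hence by Kraus' theorem it admits a decomposition with $\sum_j \kraus_j^*\kraus_j \le \mathbb{1}$ (equivalently, $\qop_k^{\dagger}(\mathbb{1})\le\mathbb{1}$, i.e. $\qop_k$ is trace non-increasing); since the value of $\sum_j \kraus_j^*\kraus_j$ is invariant under the isometric freedom in the choice of Kraus operators, imposing unitality -- a constraint on $\sum_j \kraus_j\kraus_j^*$ -- cannot destroy this inequality.

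If instead one insists on an argument using only the derived identity $\sum_j \kraus_j\kraus_j^* = \mathbb{1}_k$, then one needs the stronger hypothesis that $\qop_k$ is trace-preserving, in which case $\sum_j \kraus_j^*\kraus_j = \mathbb{1}$ holds outright and the claim is immediate; I would expect the subsequent proof to take essentially this route, either treating ``quantum operation'' as already carrying the validity constraint from Kraus' theorem or restricting attention to trace-preserving operations.
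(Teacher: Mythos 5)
Your diagnosis of the crux is exactly right -- unitality constrains $\sum_j \kraus_j\kraus_j^*$ while the validity condition constrains $\sum_j \kraus_j^*\kraus_j$, and the trace identity does not bridge the two -- but the fix you propose is circular. The inequality $\sum_j \kraus_j^*\kraus_j \le \mathbb{1}$ is the \emph{hypothesis} of Kraus' theorem as stated in the paper (``provided that \dots''), and the entire point of the proposition is to verify that unital maps satisfy it; you cannot invoke ``$\qop_k$ is a quantum operation, hence Kraus' theorem already gives the inequality'' without assuming what is to be proved. Worse, the operator inequality read literally genuinely fails for some unital maps: with $\kraus_1 = \left(\begin{smallmatrix}1&0\\0&0\end{smallmatrix}\right)$ and $\kraus_2 = \left(\begin{smallmatrix}0&0\\1&0\end{smallmatrix}\right)$ one has $\kraus_1\kraus_1^*+\kraus_2\kraus_2^* = \mathbb{1}$, so the map is unital, yet $\kraus_1^*\kraus_1+\kraus_2^*\kraus_2 = \left(\begin{smallmatrix}2&0\\0&0\end{smallmatrix}\right)\not\le\mathbb{1}$. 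So no proof of the literal Loewner inequality can exist, and your fallback (assume trace preservation) strengthens the hypothesis rather than proving the proposition as stated.

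The paper's own proof takes a route you did not anticipate: it reinterprets ``valid'' as the downstream semantic requirement $\Tr[\qop(\sigma)\rho]\le 1$ for $\sigma$ any partial sum of a POVM, and proves that bound directly. Concretely, unitality gives $\Tr[\qop(\mathbb{1})\rho] = \Tr[\,\sum_j \kraus_j\kraus_j^*\rho\,] = \Tr[\rho] = 1$, and then $1 - \Tr[\qop(\sigma)\rho] = \sum_j \Tr[\gamma^*\kraus_j(\mathbb{1}-\sigma)\kraus_j^*\gamma] \ge 0$ with $\rho = \gamma\gamma^*$, since $\mathbb{1}-\sigma$ is PSD (the remaining POVM elements sum to it). Note that this argument only uses that the inputs fed to $\qop$ are dominated by $\mathbb{1}$, which is exactly why the probability bound survives even though the operator inequality does not. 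To repair your write-up, replace the circular step with this direct estimate, or with the one-line version: $\qop(\sigma) \le \qop(\mathbb{1}) = \mathbb{1}$ by positivity of $\qop$ and $\sigma\le\mathbb{1}$, and then take the trace against $\rho$.
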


\begin{proof}
	See Appendix~\ref{sec:proof:prop:qopunital}
\end{proof}

\begin{definition}
	We call a positive operator circuit \textit{unital} if the quantum operations $\qop_k$ are unital, and if the sets $\{ E_{\xvar_k} \}_{\xvar_k \in \Omega(\Xvar_k)}$ form a POVM for each $\Xvar_k$.
\end{definition}

\begin{restatable}{proposition}{proppuncPOVM}
	\label{prop:puncPOVM}
	Let $\Xvars$ denote a set of random variables with sample space $\Omega(\Xvars)$.
	Then the set $\{ \Ocircuit(\xvars)\}_{\xvars \in \Omega(\Xvars)}$ of positive unital circuits forms a POVM.
\end{restatable}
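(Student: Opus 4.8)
The plan is to verify the two defining conditions of a POVM separately: that each matrix $\Ocircuit(\xvars)$ is PSD, and that the family sums to the identity. The first condition is immediate from Proposition~\ref{prop:pocpsd}, so the entire work lies in showing $\sum_{\xvars \in \Omega(\Xvars)} \Ocircuit(\xvars) = \mathbb{1}$. I would prove this by structural induction on the partition circuit, climbing from the leaves to the root.

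For the base case, consider a leaf unit $\circuit_k$ with $\Ocircuit_k(\xvars_k) = E_{\xvar_k}$. By the definition of a unital positive operator circuit the set $\{E_{\xvar_k}\}_{\xvar_k \in \Omega(\Xvar_k)}$ is a POVM, so $\sum_{\xvar_k} E_{\xvar_k} = \mathbb{1}_k$ by Equation~\ref{eq:povm_normalized}. For the inductive step, let $\circuit_k$ be an internal unit with children $\circuit_{k_l}$ and $\circuit_{k_r}$. Because the circuit is a partition circuit, the scope $\xvars_k$ is the disjoint union of $\xvars_{k_l}$ and $\xvars_{k_r}$, so summing over instantiations of $\xvars_k$ amounts to summing independently over $\xvars_{k_l}$ and $\xvars_{k_r}$. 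Using that $\qop_k$ is linear (it acts via a Kraus decomposition $A \mapsto \sum_j \kraus_j A \kraus_j^*$) we may pull the two sums inside $\qop_k$, and then use bilinearity of the Kronecker product to factor them across the tensor factors:
\begin{align}
	\sum_{\xvars_k} \Ocircuit_k(\xvars_k)
	&= \qop_k\!\left( \Bigl(\sum_{\xvars_{k_l}} \Ocircuit_{k_l}(\xvars_{k_l})\Bigr) \otimes \Bigl(\sum_{\xvars_{k_r}} \Ocircuit_{k_r}(\xvars_{k_r})\Bigr) \right)
	= \qop_k(\mathbb{1}_{k_l} \otimes \mathbb{1}_{k_r})
	= \mathbb{1}_k,
\end{align}
where the second equality is the induction hypothesis applied to both children and the last is the unitality of $\qop_k$ (Definition~\ref{def:cp_unital}). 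Applying this identity at the root unit yields $\sum_{\xvars \in \Omega(\Xvars)} \Ocircuit(\xvars) = \mathbb{1}$, which together with positive semi-definiteness establishes that the family is a POVM.

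I expect the only delicate point to be the factorization of $\sum_{\xvars_k}$ into the product of sums over $\xvars_{k_l}$ and $\xvars_{k_r}$: this is exactly where the partition (decomposability) structure of the circuit is used, since it guarantees the children's scopes are disjoint and together cover $\xvars_k$, so that no cross terms appear and the sum distributes over the Kronecker product. The linearity of quantum operations needed to commute $\qop_k$ with the (finite) sums is routine given the Kraus form, and everything else is bookkeeping over the tree.
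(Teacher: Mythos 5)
Your proof is correct and follows essentially the same route as the paper's: positive semi-definiteness from the operator-circuit structure, then the sum-to-identity condition by distributing the sum over disjoint child scopes through the linear quantum operations and the Kronecker product, bottoming out at the leaf POVMs and closing with unitality. The paper phrases this as pushing the sums down from the root rather than as an explicit induction from the leaves up, but the two are the same argument.
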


\begin{proof}
	See Appendix~\ref{sec:proof:prop:puncPOVM}
\end{proof}

\begin{restatable}{theorem}{theopuncprobdist}
	\label{theo:puncprobdist}
	Let $\rho$ be a density matrix and $\Ocircuit(\xvars)$ a positive unital circuit. The function
	\begin{align}
		p_\Xvars(\xvars) = \Tr [\Ocircuit(\xvars) \rho]
		\label{eq:theo:prob_operator}
	\end{align}
	is a proper probability distribution over the random variables $\Xvars$ with sample space $\Omega(\Xvars)$.
\end{restatable}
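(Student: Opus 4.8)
The plan is to reduce the statement to the results already established, so that no new analytic work is needed. First I would invoke Proposition~\ref{prop:puncPOVM}: the family $\{\Ocircuit(\xvars)\}_{\xvars \in \Omega(\Xvars)}$ is a POVM, i.e.\ each $\Ocircuit(\xvars)$ is PSD (Proposition~\ref{prop:pocpsd}) and, by unitality of the $\qop_k$ together with the fact that the leaf matrices form POVMs, the family sums to the identity. Since the $\Xvars$ are discrete with finite domains, $\Omega(\Xvars)$ is the finite Cartesian product of the individual domains, so we may enumerate its elements as $\xvars^{(0)},\dots,\xvars^{(\numevents-1)}$ and set $E(i) \coloneqq \Ocircuit(\xvars^{(i)})$; this is exactly the data of Definition~\ref{def:povm}.

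Second, I would note that $\rho$ is by hypothesis a density matrix of the appropriate size (matching the dimension of the root output of the circuit), so the pair $(\rho, \{E(i)\}_{i})$ satisfies the hypotheses of Definition~\ref{def:eventprob}. By cyclicity of the trace, $\Tr[\Ocircuit(\xvars)\rho] = \Tr[\rho\,\Ocircuit(\xvars)] = \Tr[\rho E(i)] = p(i)$, so the function $p_\Xvars$ of Equation~\ref{eq:theo:prob_operator} coincides pointwise with the event-probability function of Equation~\ref{eq:povm_prob} under the enumeration above.

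Finally, I would apply Proposition~\ref{prop:povmprob}, which states precisely that this function is a valid probability distribution: $p_\Xvars(\xvars) \geq 0$ for every $\xvars$ (trace of a product of two PSD matrices) and $\sum_{\xvars \in \Omega(\Xvars)} p_\Xvars(\xvars) = \Tr\bigl[\rho \sum_{i} E(i)\bigr] = \Tr[\rho\,\mathbb{1}] = \Tr[\rho] = 1$. This yields the claim. I do not expect a genuine obstacle here; the only point worth stating carefully is the finiteness of $\Omega(\Xvars)$, which is what guarantees that the POVM indexed by the circuit has finitely many elements and that the normalization sum is a finite sum to which linearity of the trace applies. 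All the substantive content --- positivity of circuit outputs, summation to the identity, and non-negativity/normalization of the trace functional --- has already been discharged in the preceding propositions, so the theorem is essentially their composition.
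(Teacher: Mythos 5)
Your argument is correct and is exactly the paper's proof: the author likewise derives the theorem by combining Proposition~\ref{prop:puncPOVM} (the circuit outputs form a POVM) with Proposition~\ref{prop:povmprob} (POVMs plus a density matrix yield a valid distribution), only stated in one line. Your added remarks on finiteness of $\Omega(\Xvars)$ and cyclicity of the trace are harmless elaborations, not a different route.
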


\begin{proof}
	This follows from Propositions~\ref{prop:povmprob} and~\ref{prop:puncPOVM}
\end{proof}

One of the outstanding properties of probabilistic circuits is that they are tractable -- in the sense that they allow for polytime marginalization of random variables. Positive unital circuits retain this property.

\begin{proposition}
	\label{prop:efficientmarg_sdpunc}
	Positive unital circuits allow for tractable marginalization.
\end{proposition}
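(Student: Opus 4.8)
The plan is to exploit the binary-tree structure of the underlying partition circuit together with the two defining ingredients of a positive unital circuit: the leaf sets $\{E_{\xvar_k}\}_{\xvar_k \in \Omega(\Xvar_k)}$ form POVMs, and each internal quantum operation $\qop_k$ is simultaneously \emph{linear} (being a finite sum of Kraus conjugations $A \mapsto \sum_j \kraus_j A \kraus_j^*$) and \emph{unital}. Suppose we want to marginalize out a subset $\Yvars \subseteq \Xvars$, i.e.\ to compute $\sum_{\yvars \in \Omega(\Yvars)} p_\Xvars(\xvars)$ as a function of the remaining variables. First I would use linearity of the trace to move the summation inside, $\sum_{\yvars} \Tr[\Ocircuit(\xvars)\rho] = \Tr\bigl[\bigl(\sum_{\yvars} \Ocircuit(\xvars)\bigr)\rho\bigr]$, which reduces the task to evaluating the single matrix $\sum_{\yvars}\Ocircuit(\xvars)$.

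Next I would push this summation down the tree. Because the circuit is a partition circuit, every variable labels exactly one leaf, and the scopes $\xvars_{k_l}$ and $\xvars_{k_r}$ of the two children of any internal unit $\circuit_k$ are disjoint (decomposability). Hence the summation over $\yvars$ splits into independent summations over the part of $\yvars$ lying in the left subtree and the part lying in the right subtree. Using bilinearity of the Kronecker product, $\sum_{a}\sum_{b} A_a \otimes B_b = \bigl(\sum_a A_a\bigr)\otimes\bigl(\sum_b B_b\bigr)$, together with linearity of $\qop_k$, a straightforward induction on the tree shows that $\sum_{\yvars}\Ocircuit(\xvars)$ is computed by the \emph{same} circuit in which every leaf unit for a variable $\Xvar_k \in \Yvars$ has its matrix $E_{\xvar_k}$ replaced by $\sum_{\xvar_k \in \Omega(\Xvar_k)} E_{\xvar_k}$, while all remaining leaves keep their original value.

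The crucial simplification is that, since each $\{E_{\xvar_k}\}_{\xvar_k \in \Omega(\Xvar_k)}$ is a POVM, such a replaced leaf evaluates to $\mathbb{1}_k$ by Equation~\ref{eq:povm_normalized}. I would then argue, by a second induction on subtrees, that any internal unit all of whose leaves belong to $\Yvars$ evaluates to the identity matrix of appropriate size: its two children evaluate to identities, and the unital property $\qop_k(\mathbb{1}_{k_l}\otimes\mathbb{1}_{k_r}) = \mathbb{1}_k$ forces its output to be $\mathbb{1}_k$ as well. Thus any fully marginalized subtree collapses to a constant identity matrix, and the marginal is obtained by a single feed-forward evaluation of the pruned circuit followed by one trace against $\rho$. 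Since this evaluation visits each unit once, performing one Kronecker product and one application of a quantum operation per internal unit, its running time is polynomial in the circuit size (and in the dimensions of the matrices involved), which establishes tractability.

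I expect the only real obstacle to be the bookkeeping in the two inductions: one must be careful that decomposability, i.e.\ disjointness of the child scopes, is exactly the property that lets the summation factor through the Kronecker products, and that the notion of a ``fully marginalized subtree'' propagates correctly up the tree. Everything after the summation has been pushed to the leaves is routine.
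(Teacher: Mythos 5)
Your argument is correct and follows essentially the same route as the paper's (sketched) proof: push the marginalizing sums to the leaves using linearity of the trace, the quantum operations, and the Kronecker product, whereupon the POVM completeness and unitality collapse marginalized leaves and subtrees to identity matrices, leaving a single polytime feed-forward evaluation. Your write-up is in fact a more detailed elaboration of the same idea, mirroring the structure of the proof of Proposition~\ref{prop:puncPOVM} that the paper points to.
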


\begin{proof}
	(Sketch) The proof is rather straightforward and hinges on the fact that the quantum operations in the internal units are assumed to be computable in polytime and on the fact that the marginalization of a random variable is performed by pushing the sum to the corresponding leaf in the partition circuit. Analogous to the proof of Proposition~\ref{prop:puncPOVM}.
\end{proof}

\section{Special Cases}
\label{sec:special_cases}

We will now make certain structural assumptions on the matrices representing the quantum states and the functional form of the quantum operations $\qop$. By doing so, we obtain the PSD circuits introduced by~\citet{sladek2023encoding} and (structured decomposable) probabilistic circuits as described by \citet{peharz2020einsum} as special cases (Section~\ref{sec:purestate} and Section~\ref{sec:diagcircuits} respectively).

\subsection{Hadamard Product Units}

First, however, we note that our formulation of \puncs already encompasses canonical polyadic tensor decompositions~\citep{carroll1970analysis} -- a popular choice in the circuit literature \citep{shih2021hyperspns,loconte2025relationship} to merge partitions that uses the Hadamard product instead of the Kronecker product.

Specifically, we observe that the Hadamard product between two matrices $A$ and $B$ can be rewritten using a Kronecker product;
\begin{align}
	A \circ B = P (A \otimes B) P^*,
\end{align}
where $P$ is the semi-unitary partial permutation matrix selecting a principal sub-matrix \citep[Corollary 2]{visick2000quantitative}.
This also means that a quantum operation involving a Hadamard product can be rewritten using a Kronecker product:
\begin{talign}
	\qop (A \circ B )
	& = \sum_i K_i (A\circ B) K_i^*
	\nonumber
	\\
	& = \sum_i K_i P  (A\otimes B) P^* K_i^*
	\nonumber
	\\
	& =  \sum_i K_i'  (A\otimes B) K_i'^* = \qop'(A \otimes B)
\end{talign}
Note that, for $\qop'$ to be unital it suffices that $\sum_i K_i K_i^* = \mathbb{1}$ as $P$ is semi-unitary ($PP^* = \mathbb{1}$).

From the discussion above we conclude that we can safely limit the discussion to circuits with Kronecker products as circuits with Hadamard products follow as a special case.

\subsection{Pure Quantum States}
\label{sec:purestate}

As the matrices that represent quantum states are PSD, we can decompose them as follows using the spectral theorem:
\begin{talign}
	\Ocircuit = \sum_j \Vcircuit_j \otimes \Vcircuit_j^*,
\end{talign}
with the $\Vcircuit_j$'s denoting the eigenvectors.
As a special case we then have so-called pure states. That is quantum states constructed with a single eigenvector:
\begin{align}
	\Ocircuit = \Vcircuit \otimes \Vcircuit^*,
\end{align}
We will show now that by restricting \puncs to performing operations on pure quantum states gives us the special case of PSD circuits as introduced by \citet{sladek2023encoding}, which we define first using a partition circuit.
\begin{definition}
	\label{def:vpoc}
	Let   $\xvars{=}\{\xvar_0,\dots ,\xvar_{\numvar{-}1}  \}$ be a set of $\numvar$ discrete variables.
	A PSD circuit is a partition circuit whose computation units take the following functional form:
	\begin{align}
		{\Vcircuit}_k(\xvars_k){=}
		\begin{cases}
			U_{k} \times  e_{\xvar_k},
			 & \text{if $k$ leaf}
			\\
			U_{k} {\times}  \left( \Vcircuit_{k_l}  (\xvars_k)  \otimes   \Vcircuit_{k_r} (\xvars_k) \right),
			 & \text{else}
		\end{cases}
		\label{eq:vector_units}
	\end{align}
	where the $U_k$'s are semi-unitary matrices.
	The probability $p_\Xvars(\xvars)$ is computed via
	\begin{align}
		p_\Xvars(\xvars) = {\Vcircuit}^*_{root}(\xvars) \times  \rho \times {\Vcircuit}_{root}(\xvars),
	\end{align}
	where $\rho$ is a density matrix.
\end{definition}
Note that in the original formulation \citet{sladek2023encoding} used non-semi-unitary matrices. However, \citet{loconte2024faster} have recently shown that there is no loss in expressiveness with such a restriction.

To show that PSD circuits are a special case of \puncs we now impose the following restriction on the quantum operations $\qop_k$:
\begin{align}
	\qop_k (\Ocircuit_{k_l} \otimes \Ocircuit_{k_r} )
	 & =
	\kraus_{k} \left( \Ocircuit_{k_l}  \otimes  \Ocircuit_{k_r} \right) \kraus_{k}^*
	\label{eq:pureinternal}
\end{align}
That is, we limit the quantum operation to having  only a single pair of Kraus operators. For the quantum operation to be unital we need to have $\kraus_{k} \kraus_{k}^*=\mathbb{1}$. That is, $\kraus_{k}$ has to be semi-unitary,

Furthermore, we make the following choice in the leaves:
\begin{align}
	E_{\xvar_k} = K_{k} \left( e_{\xvar_k} \otimes e_{\xvar_k}^* \right) \kraus_{k}^*,
	\label{eq:pureleaf}
\end{align}
where the set  $\{ e_{\xvar_k} \}_{\xvar_k \in \Omega(\Xvar_k)}$ is a complete set of orthonormal basis vectors, and $\kraus_{k}$ is again semi-unitary.

We can show that this choice for $E_{\xvar_k}$ forms a POVM. Firstly, by observing that each $E_{\xvar_k}$ is PSD. Secondly, by verifying the completeness of the set of operators:
\begin{talign}
	\sum_{\xvar_k \in \Omega(\Xvar_k)} E_{\xvar_k}
	& = \sum_{\xvar_k \in \Omega(\Xvar_k)} K_{k} \left( e_{\xvar_k} \otimes e_{\xvar_k}^*  \right) K_{k}^*
	\nonumber
	\\
	& =
	K_{k} \left(  \sum_{\xvar_k \in \Omega(\Xvar_k)}  e_{\xvar_k} \otimes e_{\xvar_k}^*  \right) K_{k}^*
	\nonumber
	\\
	& =
	K_{k} \mathbb{1} K_{k}^*= \mathbb{1}
\end{talign}

\begin{definition}
	We call a positive unital circuit pure if Equation~\ref{eq:pureinternal} and Equation~\ref{eq:pureleaf} hold.
\end{definition}

\begin{restatable}{proposition}{propOveq}
	\label{prop:Oveq}
	For computation units of a pure positive unital circuit and a PSD circuit it holds that
	\begin{align}
		\forall k: \Ocircuit_k(\xvars_k) = \Vcircuit_{k}(\xvars_{k}) \otimes \Vcircuit^*_{k}(\xvars_{k}).
		\label{eq:def:opvec_equivalent}
	\end{align}
	given that $U_k=\kraus_k$
\end{restatable}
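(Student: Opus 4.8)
The plan is to prove the identity by structural induction on the partition circuit underlying both the pure positive unital circuit and the PSD circuit, working from the leaves toward the root. Throughout I will use three elementary facts about the Kronecker product: (i) the mixed-product property $(AB)\otimes(CD) = (A\otimes C)(B\otimes D)$ whenever the products are defined; (ii) compatibility with the conjugate transpose, $(A\otimes B)^* = A^*\otimes B^*$; and (iii) that for a column vector $v$ one has $v\otimes v^* = v v^*$, i.e.\ the Kronecker product of a column with a row is the outer product (a special case of (i)). The hypothesis $U_k=\kraus_k$ is understood to hold at every unit.

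For the base case, let $k$ be a leaf. By Equation~\ref{eq:pureleaf}, $\Ocircuit_k(\xvars_k) = \kraus_k\,(e_{\xvar_k}\otimes e_{\xvar_k}^*)\,\kraus_k^* = \kraus_k\, e_{\xvar_k} e_{\xvar_k}^*\, \kraus_k^* = (\kraus_k e_{\xvar_k})(\kraus_k e_{\xvar_k})^*$, where the middle step is (iii). On the PSD-circuit side, Equation~\ref{eq:vector_units} gives $\Vcircuit_k(\xvars_k) = U_k e_{\xvar_k}$, so with $U_k=\kraus_k$ we get $\Vcircuit_k(\xvars_k)\otimes\Vcircuit_k^*(\xvars_k) = \Vcircuit_k(\xvars_k)\Vcircuit_k^*(\xvars_k) = (\kraus_k e_{\xvar_k})(\kraus_k e_{\xvar_k})^*$, which coincides with $\Ocircuit_k(\xvars_k)$. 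Here I use that both constructions draw $e_{\xvar_k}$ from the same complete orthonormal basis, as stipulated in the definitions of pure units and of PSD circuits.

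For the inductive step, let $k$ be internal with children $k_l,k_r$, and assume the claim for both children. The key step is to rewrite the incoming Kronecker product of operators as an outer product of a single vector: using the induction hypothesis, then fact (iii), then (i) together with (ii),
\[
\Ocircuit_{k_l}\otimes\Ocircuit_{k_r}
= (\Vcircuit_{k_l}\Vcircuit_{k_l}^*)\otimes(\Vcircuit_{k_r}\Vcircuit_{k_r}^*)
= (\Vcircuit_{k_l}\otimes\Vcircuit_{k_r})(\Vcircuit_{k_l}^*\otimes\Vcircuit_{k_r}^*)
= (\Vcircuit_{k_l}\otimes\Vcircuit_{k_r})(\Vcircuit_{k_l}\otimes\Vcircuit_{k_r})^*
\]
(arguments suppressed). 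Pushing this through the restricted quantum operation of Equation~\ref{eq:pureinternal} yields $\Ocircuit_k = \kraus_k(\Vcircuit_{k_l}\otimes\Vcircuit_{k_r})(\Vcircuit_{k_l}\otimes\Vcircuit_{k_r})^*\kraus_k^*$. Meanwhile Equation~\ref{eq:vector_units} gives $\Vcircuit_k = U_k(\Vcircuit_{k_l}\otimes\Vcircuit_{k_r})$, so $\Vcircuit_k\otimes\Vcircuit_k^* = \Vcircuit_k\Vcircuit_k^* = U_k(\Vcircuit_{k_l}\otimes\Vcircuit_{k_r})(\Vcircuit_{k_l}\otimes\Vcircuit_{k_r})^*U_k^*$; with $U_k=\kraus_k$ the two agree. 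Induction then extends the identity to every unit, in particular the root.

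I do not anticipate a genuine obstacle: the argument is a clean structural induction whose only moving part is the displayed chain of Kronecker identities. The point that warrants care there is that one must \emph{not} reassociate Kronecker products naively --- $(\Vcircuit_{k_l}\otimes\Vcircuit_{k_l}^*)\otimes(\Vcircuit_{k_r}\otimes\Vcircuit_{k_r}^*)$ does \emph{not} equal $(\Vcircuit_{k_l}\otimes\Vcircuit_{k_r})\otimes(\Vcircuit_{k_l}^*\otimes\Vcircuit_{k_r}^*)$; the correct bridge is the mixed-product property, which is precisely what lets the factors be regrouped once each $\Ocircuit_{k_\bullet}$ is known to be a rank-one outer product in the pure case. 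A minor bookkeeping remark is that $U_k$ and $\kraus_k$ act on the same space (both send the $d_{k_l}d_{k_r}$-dimensional product space to the $d_k$-dimensional output space), which is guaranteed by the shared partition-circuit structure, so the equation $U_k=\kraus_k$ is type-correct.
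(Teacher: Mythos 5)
Your proof is correct and follows essentially the same route as the paper's: a leaf-to-root structural induction in which the internal step rewrites $\Ocircuit_{k_l}\otimes\Ocircuit_{k_r}$ as the rank-one outer product $(\Vcircuit_{k_l}\otimes\Vcircuit_{k_r})(\Vcircuit_{k_l}\otimes\Vcircuit_{k_r})^*$ and absorbs $\kraus_k=U_k$ into the vector. If anything, your version is slightly more explicit than the paper's about invoking the mixed-product property rather than a naive reassociation of Kronecker factors, which is a welcome clarification but not a different argument.
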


\begin{proof}
	See Appendix~\ref{sec:proof:prop:Oveq}
\end{proof}

\begin{corollary}
	Pure positive unital circuits perform operations on pure quantum states exclusively.
\end{corollary}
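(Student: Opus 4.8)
The plan is to read the claim straight off Proposition~\ref{prop:Oveq} together with the definition of a pure quantum state. Recall that a quantum state is pure precisely when it can be written as $\Vcircuit \otimes \Vcircuit^*$ for a single vector $\Vcircuit$. Proposition~\ref{prop:Oveq} states that for every computation unit $k$ of a pure positive unital circuit we have $\Ocircuit_k(\xvars_k) = \Vcircuit_{k}(\xvars_{k}) \otimes \Vcircuit^*_{k}(\xvars_{k})$, which is exactly this form; hence the matrix carried by every unit --- leaves, internal units, and the root --- is a pure state. So the only thing that remains is to check that the quantum operations $\qop_k$ never see anything but pure states, neither as input nor as output.

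For the output this is immediate from the previous paragraph, since $\Ocircuit_k(\xvars_k) = \qop_k(\Ocircuit_{k_l}(\xvars_{k_l}) \otimes \Ocircuit_{k_r}(\xvars_{k_r}))$ is pure. For the input, I would note that by Proposition~\ref{prop:Oveq} the argument of $\qop_k$ equals $(\Vcircuit_{k_l} \otimes \Vcircuit_{k_l}^*) \otimes (\Vcircuit_{k_r} \otimes \Vcircuit_{k_r}^*)$; reordering the tensor factors (a fixed permutation, hence a unitary conjugation that changes neither purity nor trace) rewrites this as $(\Vcircuit_{k_l} \otimes \Vcircuit_{k_r}) \otimes (\Vcircuit_{k_l} \otimes \Vcircuit_{k_r})^*$, again of the shape $w \otimes w^*$ with $w = \Vcircuit_{k_l} \otimes \Vcircuit_{k_r}$. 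The same reasoning covers the leaves directly from Equation~\ref{eq:pureleaf}: $E_{\xvar_k} = \kraus_{k}(e_{\xvar_k} \otimes e_{\xvar_k}^*)\kraus_{k}^* = (\kraus_k e_{\xvar_k}) \otimes (\kraus_k e_{\xvar_k})^*$, a pure state. Thus along the whole computation only pure states are produced and combined.

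Alternatively, one could give a self-contained argument by induction over the layers of the partition circuit without invoking Proposition~\ref{prop:Oveq}: the base case is the displayed computation for $E_{\xvar_k}$ above, and the inductive step uses Equation~\ref{eq:pureinternal} together with the identity $\kraus(w \otimes w^*)\kraus^* = (\kraus w) \otimes (\kraus w)^*$ to push a pure input through the single-Kraus-operator map $\qop_k$ to a pure output.

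I do not expect any genuine obstacle here. The only point that demands a little care is the bookkeeping of the tensor-factor permutation needed to turn $(\Vcircuit_{k_l} \otimes \Vcircuit_{k_l}^*) \otimes (\Vcircuit_{k_r} \otimes \Vcircuit_{k_r}^*)$ into the canonical pure-state form $w \otimes w^*$; since that permutation is independent of the data and is realized by conjugation with a permutation (hence unitary) matrix, it preserves being a rank-one operator of the required form, so it is harmless.
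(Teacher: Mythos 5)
Your proposal is correct and takes essentially the same route as the paper, which derives the corollary immediately from Proposition~\ref{prop:Oveq}; your extra bookkeeping about the tensor-factor permutation and the alternative induction are sound but not needed beyond what that proposition already gives.
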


\begin{proof}
	This follows immediately from Proposition~\ref{prop:Oveq}.
\end{proof}

\begin{restatable}{proposition}{propvoequiprob}
	\label{prop:voequiprob}
	A PSD circuit and a pure \punc encode the same probability distribution if $U_k=\kraus_k$ for each unit~$k$.
\end{restatable}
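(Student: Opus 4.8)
The plan is to reduce the claim to Proposition~\ref{prop:Oveq} together with elementary properties of the trace. Under the hypothesis $U_k = \kraus_k$ for every unit $k$, Proposition~\ref{prop:Oveq} already gives $\Ocircuit_k(\xvars_k) = \Vcircuit_k(\xvars_k) \otimes \Vcircuit_k^*(\xvars_k)$ for all $k$, and in particular at the root, so that $\Ocircuit(\xvars) = \Ocircuit_{root}(\xvars) = \Vcircuit_{root}(\xvars) \otimes \Vcircuit_{root}^*(\xvars)$. The first step is therefore just to invoke this equality at the root unit.

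Next I would observe that the Kronecker product of the column vector $\Vcircuit_{root}(\xvars)$ with the row vector $\Vcircuit_{root}^*(\xvars)$ coincides with the rank-one outer-product matrix $\Vcircuit_{root}(\xvars)\,\Vcircuit_{root}^*(\xvars)$. Substituting into the \punc probability of Theorem~\ref{theo:puncprobdist} and using cyclicity of the trace,
\begin{align}
	\Tr[\Ocircuit(\xvars)\,\rho]
	= \Tr[\Vcircuit_{root}(\xvars)\,\Vcircuit_{root}^*(\xvars)\,\rho]
	= \Tr[\Vcircuit_{root}^*(\xvars)\,\rho\,\Vcircuit_{root}(\xvars)].
\end{align}
Since $\Vcircuit_{root}^*(\xvars)\,\rho\,\Vcircuit_{root}(\xvars)$ is a $1{\times}1$ matrix, \ie a scalar, it equals its own trace, so $\Tr[\Ocircuit(\xvars)\,\rho] = \Vcircuit_{root}^*(\xvars)\,\rho\,\Vcircuit_{root}(\xvars)$, which is exactly the value assigned by the PSD circuit in Definition~\ref{def:vpoc}. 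Hence the two models agree pointwise on $\Omega(\Xvars)$, and by Theorem~\ref{theo:puncprobdist} the common function is a proper probability distribution.

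The only point requiring care --- and really the only obstacle --- is the notational identification between the Kronecker-product form $\Vcircuit \otimes \Vcircuit^*$ used throughout for pure quantum states and the ordinary matrix outer product $\Vcircuit\Vcircuit^*$; once that bookkeeping is made explicit the argument collapses to a one-line trace manipulation. I would also remark in passing that the equivalence does not actually use that $\rho$ has trace one, so it holds verbatim for the unnormalized operators, with the normalization guaranteed separately by Proposition~\ref{prop:puncPOVM}.
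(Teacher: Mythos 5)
Your proposal is correct and takes essentially the same route as the paper's own proof: invoke Proposition~\ref{prop:Oveq} at the root and reduce $\Tr[\Ocircuit(\xvars)\rho]$ to $\Vcircuit^*_{root}(\xvars)\,\rho\,\Vcircuit_{root}(\xvars)$ via cyclicity of the trace. The only difference is that you make explicit the bookkeeping (identifying $\Vcircuit \otimes \Vcircuit^*$ with the outer product and noting that a $1{\times}1$ matrix equals its trace) that the paper leaves implicit.
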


\begin{proof}
	See Appendix~\ref{sec:proof:prop:voequiprob}
\end{proof}

In this subsection we have shown that by making specific choices in the functional form of the leaves and the internal units of a positive unital circuit we recover the special case of PSD circuits and its variants \citep{loconte2024subtractive,loconte2025sum}. Furthermore, our analysis also provides the rather satisfying interpretation of PSD circuits as quantum circuits acting on pure states exclusively. While this connection has already been pointed out informally by \citet{wangrelationship}, we give a formal argument.

\subsection{Diagonal Positive Unital Circuits}
\label{sec:diagcircuits}

The oldest and most widely used class of tractable circuits fall into the model class of probabilistic circuits. These probabilistic circuits are well understood, and their properties have been mapped out comprehensively \citep{vergari2021compositional}. We now show how we retrieve probabilistic circuits as a special case from \puncs by imposing specific constraints on the functional form of the computation units of \puncs. Specifically, by ensuring that the computations performed in \puncs are closed over diagonal matrices. Before imposing diagonal closedness on \puncs, we start by giving a definition of probabilistic circuits in terms of a partition circuits.

\begin{definition}[Probabilistic Circuit (Partition Circuit)]
	\label{def:sdprobabilisticcircuit}
	Let   $\xvars = \{\xvar_0, \allowbreak \dots ,\allowbreak \xvar_{\numvar{-}1}  \}$ be a set of discrete variables.
	We define a probabilistic circuit as a partition circuit whose computation units take the following functional form:
	\begin{align}
		 & {\Pcircuit}_k(\xvars_k){=}
		\begin{cases}
			P_{\xvar_k}
			 & \text{if $k$ is leaf}
			\\
			W_k {\times} \Bigl(\Pcircuit_{k_l}(\xvars_{k_l}) {\otimes} \Pcircuit_{k_r}(\xvars_{k_r}) \Bigr)
			 & \text{else},
		\end{cases}
		\nonumber
	\end{align}
	where the  $\Pcircuit_{\xvar_k}$'s are real-valued positive vectors such that summing over $\xvar_k$ gives a vector with exclusively ones as entries ($\sum_{\xvar_k} P_{\xvar_k}= [ 1, \dots, 1 ]^T$), and where the $W_k$'s are row-normalized matrices with positive entries only, \ie $\forall k,i: \sum_j W_{kij}=1$, where the $i$ and $j$ indices index the matrix. Furthermore, the dimensions of the $W_k$'s, $\Pcircuit_{\xvar_k}$'s and $\Pcircuit_k(\xvars_k)$ are such that they match the matrix-vector products in the computation units.
\end{definition}

\begin{restatable}{proposition}{proplpcvalid}
	\label{prop:circuit_is_prob}
	Every entry of a vector $\Pcircuit_k(\xvars_k)$ forms a valid probability distribution.
\end{restatable}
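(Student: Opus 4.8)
The plan is to prove the statement by structural induction over the partition circuit, from the leaves up to the root. The induction hypothesis for a unit $k$ is the claim itself: for every coordinate index $m$, the map $\xvars_k \mapsto \Pcircuit_k(\xvars_k)_m$ is non-negative on $\Omega(\Xvars_k)$ and satisfies $\sum_{\xvars_k \in \Omega(\Xvars_k)} \Pcircuit_k(\xvars_k)_m = 1$, where $\Xvars_k$ denotes the variables in the scope of $k$. Since a partition circuit is a binary tree whose leaves carry distinct variables, the scopes of the two children $\circuit_{k_l},\circuit_{k_r}$ of an internal unit $k$ partition $\Xvars_k$, so that $\Omega(\Xvars_k) = \Omega(\Xvars_{k_l}) \times \Omega(\Xvars_{k_r})$; this factorization of the sample space is what makes the induction go through. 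The root and leaves are then just special cases covered by the induction.

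For the base case, let $k$ be a leaf associated with variable $\Xvar_k$. By Definition~\ref{def:sdprobabilisticcircuit} the vector $P_{\xvar_k}$ has non-negative entries and $\sum_{\xvar_k} P_{\xvar_k} = [1,\dots,1]^T$, so reading off the $m$-th coordinate gives $P_{\xvar_k}(\cdot)_m \geq 0$ and $\sum_{\xvar_k \in \Omega(\Xvar_k)} P_{\xvar_k}(\xvar_k)_m = 1$; hence each entry is a valid probability distribution over $\Omega(\Xvar_k)$.

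For the inductive step, let $k$ be internal with children $k_l,k_r$ and assume the claim for both. Non-negativity is immediate: $W_k$ has positive entries and $\Pcircuit_{k_l}(\xvars_{k_l}) \otimes \Pcircuit_{k_r}(\xvars_{k_r})$ is entrywise non-negative (a Kronecker product of entrywise non-negative vectors), so the matrix--vector product $\Pcircuit_k(\xvars_k)$ is entrywise non-negative. For normalization, fix a row index $i$ and write $\Pcircuit_k(\xvars_k)_i = \sum_j W_{kij}\,\bigl(\Pcircuit_{k_l}(\xvars_{k_l}) \otimes \Pcircuit_{k_r}(\xvars_{k_r})\bigr)_j$, where each flat index $j$ corresponds to a pair $(a,b)$ with $\bigl(\Pcircuit_{k_l}\otimes\Pcircuit_{k_r}\bigr)_j = \Pcircuit_{k_l}(\xvars_{k_l})_a \cdot \Pcircuit_{k_r}(\xvars_{k_r})_b$. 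Summing over $\xvars_k = (\xvars_{k_l},\xvars_{k_r})$ and using the product structure of $\Omega(\Xvars_k)$,
\begin{align}
\sum_{\xvars_k \in \Omega(\Xvars_k)} \Pcircuit_k(\xvars_k)_i
&= \sum_j W_{kij} \Bigl( \sum_{\xvars_{k_l}} \Pcircuit_{k_l}(\xvars_{k_l})_a \Bigr)\Bigl( \sum_{\xvars_{k_r}} \Pcircuit_{k_r}(\xvars_{k_r})_b \Bigr) \nonumber \\
&= \sum_j W_{kij} \cdot 1 \cdot 1 = 1,
\end{align}
where the second equality is the induction hypothesis applied to entries $a$ and $b$, and the last is the row-normalization $\sum_j W_{kij}=1$. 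This closes the induction.

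The argument is essentially routine; the only points requiring care are the bookkeeping around the Kronecker product -- matching a flat index $j$ of $\Pcircuit_{k_l}\otimes\Pcircuit_{k_r}$ to the pair $(a,b)$ of indices of the factors -- and the splitting of $\sum_{\xvars_k\in\Omega(\Xvars_k)}$ into a product of independent sums over $\Omega(\Xvars_{k_l})$ and $\Omega(\Xvars_{k_r})$. The latter relies on the fact, built into the definition of a partition circuit, that the children's scopes are disjoint and jointly cover the parent's scope (i.e. decomposability together with smoothness).
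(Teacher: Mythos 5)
Your proof is correct and is essentially the same argument as the paper's: the paper phrases it as ``pushing the sums down to the leaves'' and then propagating the all-ones vector $\eta_k$ back up through the row-normalized $W_k$'s, which is exactly your induction hypothesis that every entry of the marginalized child vectors equals $1$. Your version is slightly more careful about the Kronecker-product index bookkeeping and about why $\Omega(\Xvars_k)$ factorizes into $\Omega(\Xvars_{k_l}) \times \Omega(\Xvars_{k_r})$, but the underlying mechanism (leaf normalization, disjoint scopes, row-normalization of $W_k$) is identical.
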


\begin{proof}
	The proof is included for the sake of completeness in Appendix~\ref{app:circuit_is_prob} and follows a similar structure to those found in the literature, \eg \citep{peharz2015theoretical}.
\end{proof}

Next, we define \textit{diagonal} \puncs. For this, consider the following functional form of a quantum operation:
\begin{talign}
	\qop (\Ocircuit)
	=
	\sum_j J_j D_{j}\Ocircuit D^*_{j} J_j^*,
	\label{eq:diagonalinternal}
\end{talign}
where the  $D_{j}$'s are diagonal matrices such that $\Tr [D_{j}D^{*}_{j}]=1$.
The $J_j$ are sparse matrices that are zero everywhere but in the $j$-th row where all their entries are $1$. For instance, if we assume that $J_2$ is a $3$ by $3$ matrix it takes the following form:
$
	J_2 =
	\left(
	\begin{smallmatrix}
			0 & 0 & 0 \\
			1 & 1 & 1 \\
			0 & 0 & 0
		\end{smallmatrix}
	\right).
$
Together, $J_j$ and $D_j$ represent the Kraus operators $K_j {=} J_j D_j$.

Given that $\Ocircuit$ is PSD it is obvious that also $\qop(\Ocircuit)$ in Equation~\ref{eq:diagonalinternal} is PSD: the individual term of the sum on the right-hand side are all PSD and the sum of PSD matrices is again a PSD matrix. For showing that the quantum operation is also unital, we rewrite the expression as follows:
\begin{talign}
	\qop(\mathbb{1}) = \sum_j J_j \diagmat (w_j) J_j^*,
\end{talign}
where $w_j$ is a vector whose entries correspond to the diagonal elements of the matrix $D_jD_j^*$. Simply carrying out the matrix products results in:
\begin{talign}
	\qop(\mathbb{1}) = \sum_j H_j  \sum_i w_{ji},
\end{talign}
where $H_j$ is a square matrix that is zero everywhere but on the $j$-th entry of the diagonal where it is $1$. From the condition that $\Tr [D_{j}D^{*}_{j}]=1$ it immediately follows that $ \sum_i w_{ji}=1$. This leaves us with:
\begin{talign}
	\qop(\mathbb{1}) = \sum_j H_j  = \mathbb{1}.
\end{talign}

Furthermore, in the leaves we pick
\begin{align}
	E_{\xvar_k} = \Delta_{\xvar_k}  \Delta_{\xvar_k}^*,
	\label{eq:diagonalleaf}
\end{align}
such that the $\Delta_{\xvar_k}$'s are diagonal matrices and such that $\sum_{\xvar_k} \Delta_{\xvar_k} \Delta_{\xvar_k}^*=\mathbb{1}$.

\begin{definition}
	We call a positive unital circuit diagonal if Equation~\ref{eq:diagonalinternal} and Equation~\ref{eq:diagonalleaf} hold.
\end{definition}

\begin{restatable}{proposition}{propdiagpunc}
	\label{prop:diagpunc}
	All operators $\Ocircuit_k (\xvars_k)$ in a diagonal \punc can be represented as diagonal matrices.
\end{restatable}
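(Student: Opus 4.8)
The plan is to argue by structural induction over the partition circuit, from the leaves up to the root. The inductive hypothesis at an internal unit $k$ will be that the two child operators $\Ocircuit_{k_l}(\xvars_{k_l})$ and $\Ocircuit_{k_r}(\xvars_{k_r})$ are diagonal, and the goal will be to conclude that $\Ocircuit_k(\xvars_k)$ is diagonal as well. For the base case I would simply note that, by Equation~\ref{eq:diagonalleaf}, each leaf carries $E_{\xvar_k} = \Delta_{\xvar_k}\Delta_{\xvar_k}^*$ with $\Delta_{\xvar_k}$ diagonal, hence a product of diagonal matrices, hence diagonal.

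For the inductive step I would first use that the Kronecker product of two diagonal matrices is diagonal, so that $A \coloneqq \Ocircuit_{k_l}(\xvars_{k_l}) \otimes \Ocircuit_{k_r}(\xvars_{k_r})$ is diagonal by the hypothesis. It then suffices to show that the diagonal quantum operation of Equation~\ref{eq:diagonalinternal} sends a diagonal matrix to a diagonal matrix, i.e.\ that $\Ocircuit_k(\xvars_k)=\qop_k(A)=\sum_j J_j D_j A D_j^* J_j^*$ is diagonal. Here $G_j \coloneqq D_j A D_j^*$ is a product of diagonal matrices and hence diagonal, and the remaining point is that $J_j$ has a single nonzero row and so factors as $J_j = e_j \mathbf{1}^{\top}$, with $e_j$ the $j$-th standard basis vector and $\mathbf{1}$ the all-ones vector (of the appropriate dimensions), so that $J_j^* = \mathbf{1}\, e_j^{\top}$. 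Then $J_j G_j J_j^* = (\mathbf{1}^{\top} G_j \mathbf{1})\, e_j e_j^{\top} = \Tr[G_j]\, e_j e_j^{\top}$, using that $G_j$ has no off-diagonal entries, which is supported only at entry $(j,j)$; summing over $j$ keeps the result diagonal. I would point out that this is precisely the manipulation already carried out in the text for the special case $A=\mathbb{1}$.

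I do not expect any serious obstacle: the argument is essentially a bookkeeping induction built on the facts that products and Kronecker products of diagonal matrices stay diagonal. The only substantive step is verifying that the broadcasting matrices $J_j$ conjugate a diagonal matrix into a diagonal matrix, and the rank-one factorization $J_j = e_j \mathbf{1}^{\top}$ makes that transparent. If a cleaner exposition is desired, I would instead carry the induction on the vector $\diagvec$ of diagonal entries of each operator, which simultaneously streamlines the proof and prepares the comparison with probabilistic circuits in the subsequent step.
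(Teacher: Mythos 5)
Your proposal is correct and follows essentially the same route as the paper: the leaves are diagonal by Equation~\ref{eq:diagonalleaf}, the diagonal children are absorbed into the diagonal conjugating factors, and the key computation $J_j G_j J_j^* = \Tr[G_j]\, e_j e_j^{\top}$ is exactly the paper's step $\Ocircuit_k = \sum_j H_j \Tr[D_{kj}]$ with $H_j = e_j e_j^{\top}$. Your version merely makes the structural induction and the rank-one factorization of $J_j$ explicit, which the paper leaves implicit.
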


\begin{proof}
	See Appendix~\ref{sec:proof:prop:diagpunc}
\end{proof}

\begin{restatable}{proposition}{propPCPunciso}
	\label{prop:PCPunciso}
	Probabilistic circuits and diagonal \puncs are isomorphic.
\end{restatable}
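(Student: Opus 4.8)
I read the statement as asserting a function-preserving bijection: over one fixed partition circuit, every diagonal \punc corresponds to a probabilistic circuit computing the same distribution, and conversely, the map being a bijection on parameters once these are put in canonical (nonnegative real) form. The plan rests on the single elementary observation that a diagonal matrix is determined by its diagonal; write $\diagvec(\cdot)$ for the vector of diagonal entries and $\diagmat(\cdot)$ for its inverse on diagonal matrices. By Proposition~\ref{prop:diagpunc} every unit value $\Ocircuit_k(\xvars_k)$ of a diagonal \punc is diagonal, so $\diagvec$ discards nothing, and the probabilistic circuit to be built is the one with $\Pcircuit_k(\xvars_k):=\diagvec\bigl(\Ocircuit_k(\xvars_k)\bigr)$.

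For the leaves, given $E_{\xvar_k}=\Delta_{\xvar_k}\Delta_{\xvar_k}^*$ one sets $P_{\xvar_k}:=\diagvec(\Delta_{\xvar_k}\Delta_{\xvar_k}^*)$; its entries are $|(\Delta_{\xvar_k})_{ii}|^2\geq 0$ and $\sum_{\xvar_k}P_{\xvar_k}=\diagvec\bigl(\sum_{\xvar_k}\Delta_{\xvar_k}\Delta_{\xvar_k}^*\bigr)=\diagvec(\mathbb{1})=[1,\dots,1]^\top$, exactly the conditions of Definition~\ref{def:sdprobabilisticcircuit}. Conversely a leaf vector $P_{\xvar_k}$ with those two properties yields the \punc leaf $\Delta_{\xvar_k}:=\diagmat(\sqrt{P_{\xvar_k}})$ (entrywise square root), which is diagonal and satisfies $\sum_{\xvar_k}\Delta_{\xvar_k}\Delta_{\xvar_k}^*=\diagmat\bigl([1,\dots,1]^\top\bigr)=\mathbb{1}$.

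The internal units are the technical core. Combining $\diagmat(a)\otimes\diagmat(b)=\diagmat(a\otimes b)$ with the identities already used to prove $\qop$ unital, namely $D_j\diagmat(c)D_j^*=\diagmat(w_j\circ c)$ with $w_j:=\diagvec(D_jD_j^*)$ and $J_j\diagmat(d)J_j^*=H_j\sum_i d_i$, one computes
\[
\qop_k\bigl(\diagmat(a)\otimes\diagmat(b)\bigr)=\sum_j H_j\,\bigl\langle w_j,\,a\otimes b\bigr\rangle=\diagmat\bigl(W_k\times(a\otimes b)\bigr),
\]
where $W_k$ is the matrix whose $j$-th row is $w_j^\top$. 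The constraints then match on the nose: each $w_j$ is nonnegative with $\sum_i w_{ji}=\Tr[D_jD_j^*]=1$, so $W_k$ is row-normalised with nonnegative entries; conversely such a $W_k$ is obtained from $D_j:=\diagmat(\sqrt{w_j})$, $w_j$ being its $j$-th row, with $\Tr[D_jD_j^*]=\sum_i w_{ji}=1$. Only the moduli $|(D_j)_{ii}|$ and $|(\Delta_{\xvar_k})_{ii}|$ ever enter, so the correspondence is to be read after fixing the canonical nonnegative-real representatives.

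An induction over the layers of the common partition circuit now gives $\Ocircuit_k(\xvars_k)=\diagmat\bigl(\Pcircuit_k(\xvars_k)\bigr)$ for every $k$ (base case: the leaf identity; inductive step: the displayed equation with $a=\Pcircuit_{k_l}(\xvars_{k_l})$ and $b=\Pcircuit_{k_r}(\xvars_{k_r})$, whose value equals $\Pcircuit_k(\xvars_k)$ by Definition~\ref{def:sdprobabilisticcircuit}). For the induced probabilities, $\Tr[\Ocircuit_{root}(\xvars)\rho]=\langle\Pcircuit_{root}(\xvars),\diagvec(\rho)\rangle$, and since $\rho$ is a density matrix $\diagvec(\rho)$ is a probability vector, so this equals the distribution a probabilistic circuit computes after adjoining one root sum unit with weights $\diagvec(\rho)$ (equivalently, for a one-dimensional root with $\rho=[1]$ one gets $p_\Xvars(\xvars)=\Pcircuit_{root}(\xvars)$ verbatim). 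The main obstacle is purely bookkeeping: making the $J_j$/$H_j$ algebra deposit the rows of $W_k$ exactly where Definition~\ref{def:sdprobabilisticcircuit} wants them, and pinning down ``isomorphic'' tightly enough --- same partition circuit, parameters in canonical nonnegative form, and the handling of $\rho$ at the root --- that the map is a genuine bijection rather than merely surjective in one direction.
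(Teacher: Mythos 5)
Your proposal is correct and follows essentially the same route as the paper's proof: both establish the leaf correspondence $P_{\xvar_k}=\diagvec(\Delta_{\xvar_k}\Delta_{\xvar_k}^*)$ and then show that the diagonal quantum operation acts on $\diagvec$-vectors exactly as multiplication by the row-normalized nonnegative matrix $W_k$ with rows $\diagvec(D_{kj}D_{kj}^*)$, matching the trace-one condition to row-normalization. Your version is somewhat more explicit about the converse direction (the entrywise square-root construction) and about how the density matrix $\rho$ is absorbed at the root, but the underlying argument is the same.
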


\begin{proof}
	See Appendix~\ref{sec:proof:prop:PCPunciso}
\end{proof}

This last proposition tells us that every (structured decomposable) probabilistic circuits can be represented as a diagonal \punc (and vice versa).

\subsection{Block-Diagonal \puncs}

In order to combat theoretical limitations of pure and diagonal \puncs, \citet{loconte2025sum} introduced a circuit class dubbed \textit{\textmu SOCS}. In Appendix~\ref{sec:noisyblockdiagonalpuncs} we show how this circuit class can be represented with \puncs over block-diagonal matrices. Interestingly, this allows us to discuss the concept of noise in quantum information theory.


\section{Decomposable \puncs}
\label{sec:nsdnmcircuit}

Using the concept of partition circuits we were able to define the different circuits in terms of matrix-matrix or matrix-vector multiplications. This was, however, only possible because the circuits we have studied so far obey the property of \textit{structured decomposability}~\citep{pipatsrisawat2008new,darwiche2011sdd}.
We will now study circuits adhering to the weaker property of (non-structured) decomposability. To this end, we first define probabilistic circuits in the usual way using nested sum and product units \citep{vergari2021compositional} (and not partition circuits).

\begin{definition}[Probabilistic Circuit]
	\label{def:probabilisticcircuit}
	Let   $\xvars = \{\xvar_0, \allowbreak \dots ,\allowbreak \xvar_{\numvar{-}1}  \}$ be a set of discrete variables.
	A probabilistic circuit is a computation graph consisting of three kinds of computational units:
	\textit{leaf}, \textit{product}, and \textit{sum}.
	Each product or sum unit receives inputs from a set of input units denoted by $\inputs(k)$.
	Each unit $k$ encodes a function $\pcircuit_{k}(\xvars_k)$ with $\xvars_k {\subseteq} \xvars$ as follows:
	\begin{align*}
		{\pcircuit}_k(\xvars_k)=
		\begin{cases}
			f_k({\xvar_k})                                                & \text{if $k$ leaf unit}    \\
			\pcircuit_{k_l}(\xvars_{k_l})  \pcircuit_{k_r} (\xvars_{k_r}) & \text{if $k$ product unit} \\
			\sum_{j\in\inputs(k)} \weight_{kj} \pcircuit_j(\xvars_j)      & \text{if $k$ sum unit}
		\end{cases}
	\end{align*}
	where $f_{\xvar_k}$  denotes a parametrized function such that $\sum_{\xvar_k} f_{\xvar_k}=1$ and where $\forall k: \sum_{j\in\inputs(k)} \weight_{kj} =1$.
\end{definition}

\begin{proposition}
	Let $\Xvars_k$ be a set of random variables with sample space $\Omega(\Xvars_k)$ equal to the possible values for $\xvars_k$.
	A probabilistic circuit $\pcircuit(\xvars_k)$ then defines a proper probability distribution as
	$
		p_{\Xvars_k}(\xvars_k) = \pcircuit_k (\xvars_k)
	$.
\end{proposition}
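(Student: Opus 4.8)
The plan is to prove the two defining requirements of a probability distribution --- non-negativity and normalization --- simultaneously by structural induction over the computation graph, processing units in a topological order from the leaves up to the unit $k$. Concretely, I would establish the invariant that for every unit $j$ we have (i) $\pcircuit_j(\xvars_j) \geq 0$ for all $\xvars_j$, and (ii) $\sum_{\xvars_j \in \Omega(\Xvars_j)} \pcircuit_j(\xvars_j) = 1$, where $\Omega(\Xvars_j)$ is the set of possible values of the scope $\xvars_j$. Since the invariant is preserved at every unit, it holds in particular at the output unit $k$, which is exactly the statement that $p_{\Xvars_k}(\xvars_k) = \pcircuit_k(\xvars_k)$ is a proper probability distribution over $\Xvars_k$.

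For the base case I take a leaf unit $j$: here $\pcircuit_j(\xvars_j) = f_j(\xvar_j)$, and by the assumptions on leaf functions (they are probability mass functions over $\Xvar_j$) we have $f_j \geq 0$ and $\sum_{\xvar_j} f_j(\xvar_j) = 1$, so both invariants hold. For a product unit $j$ I invoke decomposability, which gives $\xvars_{j_l} \cap \xvars_{j_r} = \emptyset$ and $\xvars_j = \xvars_{j_l} \cup \xvars_{j_r}$; non-negativity is immediate as a product of non-negatives, and the sum over $\Omega(\Xvars_j)$ splits as $\bigl(\sum_{\xvars_{j_l}} \pcircuit_{j_l}(\xvars_{j_l})\bigr)\bigl(\sum_{\xvars_{j_r}} \pcircuit_{j_r}(\xvars_{j_r})\bigr) = 1 \cdot 1$ by the induction hypothesis applied to the two children. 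For a sum unit $j$ I invoke smoothness, i.e. that every input $i \in \inputs(j)$ has scope $\xvars_i = \xvars_j$; then $\pcircuit_j \geq 0$ because the weights $\weight_{ji}$ are non-negative, and $\sum_{\xvars_j} \pcircuit_j(\xvars_j) = \sum_{i \in \inputs(j)} \weight_{ji} \sum_{\xvars_i} \pcircuit_i(\xvars_i) = \sum_{i \in \inputs(j)} \weight_{ji} = 1$, using the induction hypothesis on the inputs together with the assumption $\sum_{i \in \inputs(j)} \weight_{ji} = 1$.

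The argument is essentially routine; the only step that needs genuine care is the scope bookkeeping, namely that decomposability is precisely what licenses rewriting $\sum_{\xvars_j}$ as $\sum_{\xvars_{j_l}}\sum_{\xvars_{j_r}}$ with a clean product factorization at a product unit, and that smoothness is precisely what lets us replace $\sum_{\xvars_i}$ by $\sum_{\xvars_j}$ inside a sum unit (so these structural properties, implicit in Definition~\ref{def:probabilisticcircuit}, should be stated explicitly at the outset of the proof). I would also note that the circuit being a DAG rather than a tree causes no difficulty: the induction is carried out per unit along a topological order, and a sub-circuit reused by several parents satisfies the invariant irrespective of how often it is referenced. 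Finally, this proof specializes the structured-decomposable case of Proposition~\ref{prop:circuit_is_prob}, so the two can share essentially all of their content.
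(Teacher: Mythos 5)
Your proof is correct and is exactly the standard argument the paper leaves implicit: the paper's own proof simply declares both non-negativity and normalization ``trivial,'' and your structural induction (leaves as normalized mass functions, decomposability to factor the sum at product units, smoothness plus weight normalization at sum units) is the canonical way to fill that in, mirroring the sum-pushing argument the paper does spell out for Proposition~\ref{prop:circuit_is_prob}. Your observation that decomposability and smoothness must be assumed explicitly is well taken, since the paper only defines these properties after stating the proposition.
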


\begin{proof}
	We need to show that $\forall \xvars_k {\in} \Omega(\Xvars_k): p_{\Xvars_k}(\xvars_k){\geq} 0$ and that $\sum_{\xvars_k \in \Omega(\Xvars_k)} p_{\Xvars_k}(\xvars_k)=1$. Both of which are trivial.
\end{proof}

The sums and products that we write explicitly in Definition~\ref{def:probabilisticcircuit} are also implicitly present in Definition~\ref{def:sdprobabilisticcircuit}, namely within the matrix-vector product of the internal computation units of the underlying partition circuit.

\subsection{A Primer on Decomposability}

We can now define the concept of (non-structured) decomposability using the concept of a scope function.

\begin{definition}[Scope]
	\label{def:scope:cond}
	The scope of a unit $k$, denoted by $\scope(k)$, is the set of random variables $\Xvars_k$ for which the function $\pcircuit_k (\cdot)$ encodes a probability distribution.
\end{definition}

\begin{definition}[Decomposability]
	A circuit is decomposable if the inputs of every product unit
	$k$ encode distributions over disjoint sets of random variables:
	$\scope (k_l) \cap \scope (k_r) = \emptyset$ with $\{k_l, k_r\}= \inputs (k)$.
\end{definition}

It is precisely this decomposability property that enables tractable (any-order) marginalization in probabilistic circuits, as well as positive operator circuits and relaxing the property of decomposability leads necessarily to a decrease in tractability \citep{choi2020probabilistic,vergari2021compositional,zuidberg2024probabilistic}. Usually the smoothness property is also assumed to hold
\begin{definition}[Smoothness]
	A circuit is smooth if for every sum unit $k$ its inputs encode distributions over the same random variables:
	$\forall j_1, j_2 {\in} \inputs(k)$ it holds that $\scope(j_1){=}\scope(j_2) $.
\end{definition}
Note that for circuits constructed using a partition tree, the decomposability and smoothness properties hold by construction. We give a graphical representation of a decomposable circuit in Figure \ref{fig:dcircuit}.

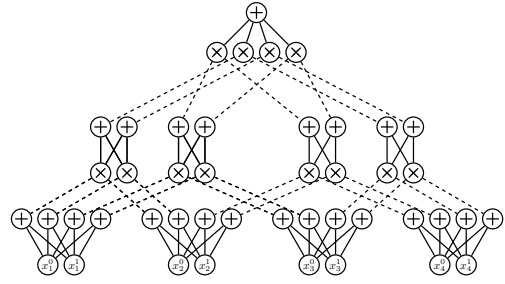
\begin{figure}
	\centering

	\resizebox{0.8\columnwidth}{!}{
	\tikzset{point/.style={circle,inner sep=0pt,minimum size=3pt,fill=red}}

\centering
\begin{tikzpicture}

	\sumnode[line width=\midlinewidth]{v11};
	\sumnode[line width=\midlinewidth, right=\halfdist of v11]{v12};
	\sumnode[line width=\midlinewidth, right=\halfdist of v12]{v13};
	\sumnode[line width=\midlinewidth, right=\halfdist of v13]{v14};
	\sumnode[line width=\midlinewidth, right=\middist of v14]{v21};
	\sumnode[line width=\midlinewidth, right=\halfdist of v21]{v22};
	\sumnode[line width=\midlinewidth, right=\halfdist of v22]{v23};
	\sumnode[line width=\midlinewidth, right=\halfdist of v23]{v24};

	\prodnode[line width=\midlinewidth, above=\smalldist of v14]{p121};
	\prodnode[line width=\midlinewidth, right=\halfdist of p121]{p122};

	\prodnode[line width=\midlinewidth, right=\middist of p122]{p131};
	\prodnode[line width=\midlinewidth, right=\halfdist of p131]{p132};

	\sumnode[line width=\midlinewidth, above=\smalldist of p121]{s121};
	\sumnode[line width=\midlinewidth, above=\smalldist of p122]{s122};

	\sumnode[line width=\midlinewidth, above=\smalldist of p131]{s131};
	\sumnode[line width=\midlinewidth, above=\smalldist of p132]{s132};

	\sumnode[line width=\midlinewidth, right=\middist of v24]{v31};
	\sumnode[line width=\midlinewidth, right=\halfdist of v31]{v32};
	\sumnode[line width=\midlinewidth, right=\halfdist of v32]{v33};
	\sumnode[line width=\midlinewidth, right=\halfdist of v33]{v34};
	\sumnode[line width=\midlinewidth, right=\middist of v34]{v41};
	\sumnode[line width=\midlinewidth, right=\halfdist of v41]{v42};
	\sumnode[line width=\midlinewidth, right=\halfdist of v42]{v43};
	\sumnode[line width=\midlinewidth, right=\halfdist of v43]{v44};

	\prodnode[line width=\midlinewidth, above=\smalldist of v32]{p241};
	\prodnode[line width=\midlinewidth, right=\halfdist of p241]{p242};

	\prodnode[line width=\midlinewidth, right=\middist of p242]{p341};
	\prodnode[line width=\midlinewidth, right=\halfdist of p341]{p342};

	\sumnode[line width=\midlinewidth, above=\smalldist of p241]{s241};
	\sumnode[line width=\midlinewidth, above=\smalldist of p242]{s242};

	\sumnode[line width=\midlinewidth, above=\smalldist of p341]{s341};
	\sumnode[line width=\midlinewidth, above=\smalldist of p342]{s342};

	\prodnode[line width=\midlinewidth, above right= 2*\middist and -2.5pt of s132]{p12341};
	\prodnode[line width=\midlinewidth, right=\halfdist of p12341]{p12342};
	\prodnode[line width=\midlinewidth, right=\halfdist of p12342]{p12343};
	\prodnode[line width=\midlinewidth, right=\halfdist of p12343]{p12344};

	\sumnode[line width=\midlinewidth, above right=\smalldist and -1.5pt of p12342]{s12341};

	\edge[line width=\midlinewidth,dashed] {p12341} {s131, s241};
	\edge[line width=\midlinewidth,dashed] {p12342} {s121, s341};
	\edge[line width=\midlinewidth,dashed] {p12343} {s122, s342};
	\edge[line width=\midlinewidth,dashed] {p12344} {s132, s242};

	\edge[line width=\midlinewidth,left] {s12341} {p12341, p12342, p12343, p12344};


	\varnode[line width=\midlinewidth, below=\smalldist of v12]{var11}{$x^0_1$};
	\varnode[line width=\midlinewidth, below=\smalldist of v13]{var12}{$x^1_1$};

	\edge[line width=\midlinewidth,left] {var11} {v11, v12, v13, v14};
	\edge[line width=\midlinewidth,left] {var12} {v11, v12, v13, v14};

	\varnode[line width=\midlinewidth, below=\smalldist of v22]{var21}{$x^0_2$};
	\varnode[line width=\midlinewidth, below=\smalldist of v23]{var22}{$x^1_2$};

	\edge[line width=\midlinewidth,left] {var21} {v21, v22, v23, v24};
	\edge[line width=\midlinewidth,left] {var22} {v21, v22, v23, v24};

	\varnode[line width=\midlinewidth, below=\smalldist of v32]{var31}{$x^0_3$};
	\varnode[line width=\midlinewidth, below=\smalldist of v33]{var32}{$x^1_3$};

	\edge[line width=\midlinewidth,left] {var32} {v31, v32, v33, v34};
	\edge[line width=\midlinewidth,left] {var31} {v31, v32, v33, v34};

	\varnode[line width=\midlinewidth, below=\smalldist of v42]{var41}{$x^0_4$};
	\varnode[line width=\midlinewidth, below=\smalldist of v43]{var42}{$x^1_4$};

	\edge[line width=\midlinewidth,left] {var42} {v41, v42, v43, v44};
	\edge[line width=\midlinewidth,left] {var41} {v41, v42, v43, v44};

	\edge[line width=\midlinewidth,dashed] {p121} {v11, v21};
	\edge[line width=\midlinewidth,dashed] {p122} {v12, v22};
	\edge[line width=\midlinewidth,dashed] {p131} {v13, v31};
	\edge[line width=\midlinewidth,dashed] {p132} {v14, v32};

	\edge[line width=\midlinewidth,left] {s121, s122} {p121, p122};
	\edge[line width=\midlinewidth,left] {s131, s132} {p131, p132};

	\edge[line width=\midlinewidth,dashed] {p121} {v11, v21};
	\edge[line width=\midlinewidth,dashed] {p122} {v12, v22};
	\edge[line width=\midlinewidth,dashed] {p131} {v13, v31};
	\edge[line width=\midlinewidth,dashed] {p132} {v14, v32};

	\edge[line width=\midlinewidth,left] {s121, s122} {p121, p122};
	\edge[line width=\midlinewidth,left] {s131, s132} {p131, p132};

	\edge[line width=\midlinewidth,dashed] {p241} {v23, v41};
	\edge[line width=\midlinewidth,dashed] {p242} {v24, v42};
	\edge[line width=\midlinewidth,dashed] {p341} {v33, v43};
	\edge[line width=\midlinewidth,dashed] {p342} {v34, v44};

	\edge[line width=\midlinewidth,left] {s241, s242} {p241, p242};
	\edge[line width=\midlinewidth,left] {s341, s342} {p341, p342};

\end{tikzpicture}
}
	\caption{Graphical representation of a (non-structured) decomposable probabilistic circuit. We have four binary random variables in the leaves at the bottom. These are first passed individually through a set of sum units. In the second layer of the circuit we then have for blocks of computation units $\kappa_1$, $\kappa_2$, $\kappa_3$, and $\kappa_4$ (from lest to right). Each with their own scope ($\scope(\kappa_1)= \{ \Xvar_1, \Xvar_2  \}$, $\scope(\kappa_2)= \{ \Xvar_1, \Xvar_3  \}$, $\scope(\kappa_3)= \{ \Xvar_2, \Xvar_4  \}$, and $\scope(\kappa_4)= \{ \Xvar_3, \Xvar_4  \}$). At the root we have a block of computations $\kappa_{root}$ with $\scope(\kappa_{root}) = \{ \Xvar_1, \Xvar_2, \Xvar_3, \Xvar_4 \}$. Note that we slightly abuse notation and applied the scope function on sets of computation units instead of single computations units.
	}
	\label{fig:dcircuit}
\end{figure}

\begin{definition}[Structured Decomposability]
	A circuit is structured decomposable if the circuit is decomposable and product units with identical scope decompose identically: let $k$ and $j$ be two product nodes. If we have that $\scope (k_l) {=} \scope(j_l)$ and $\scope (k_r) {=} \scope(j_r)$ for every pair of product nodes where $\scope(k) {=} \scope(j)$, then we call a circuit structured decomposable.
\end{definition}

For partition circuits, we have again that the property of structured decomposability is respected by construction.
We give a structured decomposable circuit in Figure \ref{fig:sdcircuit}.

We can now compare the two circuits in Figure~\ref{fig:dcircuit} and Figure~\ref{fig:sdcircuit}. Specifically, we point to the computational blocks at the root of the circuits, where we have for each circuit four product units and a single sum unit. In both circuits the scope of all the product units (and the sum unit) is the same:

\begin{align}
	\scope(\pcircuit_{k}^{NSD}) & = {\Xvar_1,\Xvar_2,\Xvar_3,\Xvar_4 }
	\\
	\scope(\pcircuit_{k}^{SD})  & = {\Xvar_1,\Xvar_2,\Xvar_3,\Xvar_4 },
\end{align}
where $NSD$ stands for non-structured decomposable and $SD$ for structured decomposable. The index $k \in \{1,2,3,4 \}$ specifies in both cases the individual product units at the root (going from left to right). For the structured decomposable circuit in Figure~\ref{fig:sdcircuit} we have that the scope arises from the same union of random variables for all four units: $\scope(\pcircuit_{k}^{SD}) =  \{\Xvar_1,\Xvar_2 \} \cup \{\Xvar_3,\Xvar_4 \}$, for $k \in \{1,2,3,4 \}$.

The situation for the non-structured decomposable circuit presents itself differently. Here we have $\scope(\pcircuit_{k}^{SD}) =  \{\Xvar_1,\Xvar_2\} \cup \{\Xvar_3,\Xvar_4 \}$, for $k \in \{2,3\}$ and $\scope(\pcircuit_{k}^{NSD}) =  \{\Xvar_1,\Xvar_3\} \cup \{\Xvar_2,\Xvar_4 \}$, for $k \in \{1,4 \}$ -- showing that not all product units with the same scope decompose their respective scopes in the same fashion.

\citet{pipatsrisawat2008new} showed that dropping the requirement that the product nodes with the same scope decompose in the same way leads to exponential gains in expressiveness. In the next subsection we show how we can construct non-structured decomposable non-monotone circuits. This is in contrast to all the non-monotone circuits discussed in the previous sections and the non-monotone circuits discussed in the literature \citep{sladek2023encoding,loconte2025sum,loconte2024subtractive,wangrelationship}.

\subsection{Dropping Structured Decomposability}
\label{sec:nonstructdecomp}

\begin{definition}[Positive Unital Circuit]
	\label{def:dpunc}
	Let   $\xvars = \{\xvar_0, \allowbreak \dots ,\allowbreak \xvar_{\numvar{-}1}  \}$ be a set of discrete variables.
	A positive unital circuit is a computation graph consisting of three kinds of computational units:
	\textit{leaf}, \textit{product}, and \textit{sum}.
	Each product or sum unit receives inputs from a set of input units denoted by $\inputs(k)$.
	Each unit $k$ encodes a function $\ocircuit_{k}(\xvars_k)$ with $\xvars_k {\subseteq} \xvars$ as follows:
	\begin{align*}
		{\ocircuit}_k(\xvars_k)=
		\begin{cases}
			e_{\xvar_k}                                                           & \text{if $k$ leaf unit}    \\
			\ocircuit_{k_l}(\xvars_{k_l}) \otimes \ocircuit_{k_r}  (\xvars_{k_r}) & \text{if $k$ product unit} \\
			\sum_{j\in\inputs(k)} \weight_{kj} \qop_{kj} (\ocircuit_j(\xvars_j) ) & \text{if $k$ sum unit}
		\end{cases}
	\end{align*}
	where $e_{\xvar_k}$  denotes an element of a POVM. The $\qop_{kj}$ are unital quantum operations and the $w_{kj}$ are positive real-valued scalars and obey $\forall k: \sum_j w_{kj} {=}1$.
\end{definition}

Note how the sum units form convex combinations of quantum operation, \ie quantum mixtures.
From now on we will denote (non-structured) decomposable \puncs by \dpuncs (\cf Definition~\ref{def:dpunc}) and structured decomposable \puncs by \sdpuncs (\cf Section \ref{sec:puncs}).

\begin{restatable}{theorem}{theoproperprobdpunc}
	\label{theo:properprobdpunc}

	Let $\Xvars_k$ be a set of random variables with sample space $\Omega(\Xvars_k)$ equal to the possible values for $\xvars_k$.
	A \dpunc $\ocircuit(\xvars_k)$ and a density matrix $\rho$ then define a proper probability distribution as
	$
		p_\Xvars(\xvars) = \Tr [\ocircuit(\xvars) \rho]
		\label{eq:theo:prob_operator_nsd}
	$.
\end{restatable}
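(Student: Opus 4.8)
The plan is to reduce the statement to Proposition~\ref{prop:povmprob}: I will show that the family of matrices $\{\ocircuit(\xvars)\}_{\xvars \in \Omega(\Xvars)}$ emitted at the root of the \dpunc is a POVM, i.e.\ each $\ocircuit(\xvars)$ is PSD and $\sum_{\xvars \in \Omega(\Xvars)} \ocircuit(\xvars) = \mathbb{1}$. Once that is established, applying Proposition~\ref{prop:povmprob} to this POVM together with the density matrix $\rho$ immediately gives that $p_\Xvars(\xvars) = \Tr[\ocircuit(\xvars)\rho]$ is a proper probability distribution. The core of the argument is a structural induction over the units of the circuit proving two claims at once: (i) for every unit $k$ and every $\xvars_k \in \Omega(\scope(k))$ the matrix $\ocircuit_k(\xvars_k)$ is PSD, of a dimension depending only on $k$; and (ii) $\sum_{\xvars_k \in \Omega(\scope(k))} \ocircuit_k(\xvars_k) = \mathbb{1}_k$, the identity of that dimension. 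I assume throughout, as is standard (cf.\ the discussion preceding Definition~\ref{def:dpunc}), that the circuit is smooth in addition to decomposable, so that $\scope(k)$ and hence $\Omega(\scope(k))$ is well defined for every unit.

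For the base case, a leaf unit outputs an element $e_{\xvar_k}$ of a POVM, which is PSD by definition, and $\sum_{\xvar_k} e_{\xvar_k} = \mathbb{1}$ is exactly the POVM normalization of Equation~\ref{eq:povm_normalized}. For a product unit $k$ with inputs $k_l, k_r$, the matrix $\ocircuit_{k_l}(\xvars_{k_l}) \otimes \ocircuit_{k_r}(\xvars_{k_r})$ is PSD by the induction hypothesis and the fact, already used for Proposition~\ref{prop:pocpsd}, that the Kronecker product of PSD matrices is PSD. For normalization, decomposability gives $\scope(k_l)\cap\scope(k_r)=\emptyset$, so $\Omega(\scope(k))=\Omega(\scope(k_l))\times\Omega(\scope(k_r))$ and the sum splits by the mixed-product property of $\otimes$:
\begin{talign}
	\sum_{\xvars_k} \ocircuit_{k_l}(\xvars_{k_l}) \otimes \ocircuit_{k_r}(\xvars_{k_r})
	= \Bigl( \sum_{\xvars_{k_l}} \ocircuit_{k_l}(\xvars_{k_l}) \Bigr) \otimes \Bigl( \sum_{\xvars_{k_r}} \ocircuit_{k_r}(\xvars_{k_r}) \Bigr)
	= \mathbb{1}_{k_l} \otimes \mathbb{1}_{k_r} = \mathbb{1}_k .
\end{talign}

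For a sum unit $k$, smoothness gives $\scope(j)=\scope(k)$ for every $j \in \inputs(k)$, so all the sums range over the same index set $\Omega(\scope(k))$. PSD-ness of $\sum_{j} \weight_{kj}\,\qop_{kj}(\ocircuit_j(\xvars_j))$ holds because each unital quantum operation $\qop_{kj}$ maps PSD matrices to PSD matrices through its Kraus representation, the $\weight_{kj}$ are nonnegative, and a sum of PSD matrices is PSD. For normalization I use linearity of the $\qop_{kj}$ to interchange the two sums, then unitality ($\qop_{kj}(\mathbb{1}_j)=\mathbb{1}_k$, Definition~\ref{def:cp_unital}) and $\sum_j \weight_{kj}=1$:
\begin{talign}
	\sum_{\xvars_k} \sum_{j} \weight_{kj}\, \qop_{kj}(\ocircuit_j(\xvars_j))
	= \sum_{j} \weight_{kj}\, \qop_{kj}\Bigl( \sum_{\xvars_j} \ocircuit_j(\xvars_j) \Bigr)
	= \sum_{j} \weight_{kj}\, \qop_{kj}(\mathbb{1}_j)
	= \sum_{j} \weight_{kj}\, \mathbb{1}_k = \mathbb{1}_k .
\end{talign}
Instantiating (i) and (ii) at the root unit shows that $\{\ocircuit(\xvars)\}_{\xvars \in \Omega(\Xvars)}$ is a POVM, and Proposition~\ref{prop:povmprob} then finishes the proof.

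I expect the only genuinely delicate point — and precisely the reason the argument survives the drop from structured decomposability to plain decomposability — to be the product-unit step: there we never invoke that the two inputs decompose a shared scope in the \emph{same} way; we only need their scopes to be \emph{disjoint} so that the outer sum factors into an outer product of two independent sums. The remaining work (keeping track of the per-unit matrix dimensions, which may differ between two units of equal scope, and noting this is harmless because Kraus' theorem allows quantum operations between Hilbert spaces of different dimensions) is routine bookkeeping.
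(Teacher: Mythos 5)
Your proof is correct and follows essentially the same route as the paper's: establish that the root outputs form a POVM by showing PSD-ness is preserved through Kronecker products and convex combinations of quantum operations, and obtain normalization by pushing the sums down to the leaves and propagating identities back up via unitality, then invoke the trace formula. Your version is simply a more explicit structural induction (spelling out where decomposability and smoothness are used), whereas the paper states the same argument as a brief sketch.
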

\begin{proof}
	See Appendix~\ref{sec:proof:theo:properpropdpunc}
\end{proof}

\begin{restatable}{proposition}{propsdpuncsubsetdpunc}
	\label{prop:sdpuncsubsetdpunc}
	\sdpuncs are a proper subset of \dpuncs.
\end{restatable}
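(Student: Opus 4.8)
The claim has two parts: (i) every \sdpunc is (equivalent to) a \dpunc, and (ii) the inclusion is strict, i.e.\ there is a \dpunc that cannot be represented as any \sdpunc. For part (i), the plan is to show that the partition-circuit formulation of a \sdpunc (Definition~\ref{def:poc} together with the unital restriction) is a special case of the nested sum/product formulation in Definition~\ref{def:dpunc}. The key observation is the one already flagged in the excerpt just after Definition~\ref{def:probabilisticcircuit}: the matrix–matrix products hidden inside the internal units of a partition circuit can be unrolled into explicit layers of sum and product units. Concretely, an internal unit $\Ocircuit_k = \qop_k(\Ocircuit_{k_l} \otimes \Ocircuit_{k_r})$ with $\qop_k(\cdot) = \sum_j \kraus_j (\cdot) \kraus_j^*$ is realised by first forming the product unit $\ocircuit_{k_l} \otimes \ocircuit_{k_r}$ and then feeding it into a sum unit that applies the unital quantum operations $\qop_{kj}$ with convex weights; since a single unital $\qop_k$ can be split as a convex combination $\sum_j w_j \qop_{kj}$ of unital operations (rescale each Kraus pair so that it is individually trace/Loewner-normalised and absorb the scalar into $w_j$), this matches Definition~\ref{def:dpunc} exactly. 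One also checks that a \sdpunc built from a partition tree is decomposable and smooth by construction, so it satisfies all the structural constraints of Definition~\ref{def:dpunc}, and that the induced probability distribution $\Tr[\Ocircuit(\xvars)\rho]$ agrees under the two descriptions. Hence \sdpuncs $\subseteq$ \dpuncs.

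For part (ii), the plan is a succinctness/separation argument that piggybacks on the classical result of \citet{pipatsrisawat2008new} (cited in the excerpt) that dropping structured decomposability yields exponential gains in expressiveness for monotone (probabilistic) circuits. Since diagonal \puncs are isomorphic to probabilistic circuits (Proposition~\ref{prop:PCPunciso}), and structured decomposable \puncs in particular restrict to structured decomposable probabilistic circuits on the diagonal, any function separating non-structured from structured decomposable PCs also separates \dpuncs from \sdpuncs: a small \dpunc (obtained by embedding the small non-structured decomposable PC diagonally, which is legal by the \dpunc definition since its sum/product units are a special case) encodes a distribution that would require a \sdpunc of exponential size, because collapsing it to a \sdpunc and then reading off the diagonal would yield a subexponential structured decomposable PC for a function known to require exponential size. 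I would state this as: there exists a family of distributions representable by polynomial-size \dpuncs but only by exponential-size \sdpuncs, which in particular gives strict inclusion.

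The main obstacle is the strictness direction, and specifically making the "collapse a \sdpunc to a structured decomposable PC" step airtight. It is not immediately obvious that a small \sdpunc representing a \emph{diagonally-embedded} distribution can be turned into a small structured decomposable \emph{probabilistic} circuit (a priori the \sdpunc might exploit genuinely non-diagonal, non-monotone matrices to be small even for a distribution that happens to come from a diagonal source). The cleanest fix is to choose the separating distribution so that its hardness for structured decomposability is information-theoretic rather than monotone-specific — e.g.\ a distribution whose structured-decomposable representation size is lower-bounded via a vtree-respecting rank/communication argument that applies to \emph{any} \sdpunc regardless of monotonicity (the relevant rank measure for \sdpuncs being the rank over PSD matrices rather than nonnegative rank). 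I would therefore either invoke an existing separation stated at the level of (non-monotone) structured decomposable circuits, or sketch the communication-complexity lower bound directly for \sdpuncs, and note that the \dpunc upper bound is immediate from the diagonal embedding. If no off-the-shelf separation at the right level of generality is available, the honest fallback is to prove only set inclusion plus a weaker strictness witness (a single small example distribution, such as one on a "butterfly"/non-series-parallel vtree structure like the root block in Figure~\ref{fig:dcircuit}, together with a direct argument that no structured vtree admits a compact \sdpunc for it), which still suffices to establish that \sdpuncs form a \emph{proper} subset.
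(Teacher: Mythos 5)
Your part (i) takes essentially the same route as the paper, just traversed in the opposite direction: the paper's proof starts from a \dpunc sum unit $\sum_{j}\weight_{kj}\qop_{kj}(\cdot)$ fed by identically-decomposing product units and collapses the convex combination of unital operations into a single completely positive unital map $\qop_k$, thereby identifying the \sdpunc computation with the structured-decomposable fragment of Definition~\ref{def:dpunc}; you unroll a single $\qop_k$ into the sum-unit form instead. Either direction establishes the identification, but your specific mechanism for the unrolling is flawed: you propose to split $\qop_k(X)=\sum_j \kraus_j X\kraus_j^*$ into a convex combination of unital operations by rescaling each Kraus pair so that it is individually normalised. A single-Kraus map $X\mapsto c_j\kraus_j X\kraus_j^*$ is unital only when $\kraus_j\kraus_j^*$ is a scalar multiple of the identity, which fails for general Kraus decompositions of a unital map. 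The repair is trivial --- Definition~\ref{def:dpunc} allows a sum unit with a single input of weight one, so $\qop_k$ already sits inside the \dpunc grammar without any splitting --- but the step as written does not go through.

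The larger mismatch is part (ii). The paper's proof establishes only the inclusion; the properness is syntactic (there exist decomposable \puncs whose equal-scope product units decompose differently, e.g.\ the structure in Figure~\ref{fig:dcircuit}, and these lie outside the \sdpunc form), and the exponential expressive-efficiency separation between \dpuncs and \sdpuncs is explicitly left as a conjecture in Section~\ref{sec:conclusions}. You are therefore attempting to prove something strictly stronger than what the paper claims, and the obstacle you yourself flag --- that a small \sdpunc for a diagonally-embedded distribution need not yield a small structured decomposable \emph{probabilistic} circuit, because the lower bound of \citet{pipatsrisawat2008new} is monotone-specific while an \sdpunc may exploit non-monotone cancellations --- is precisely why this is open. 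Your fallback witness (``a direct argument that no structured vtree admits a compact \sdpunc'') is not a weaker claim but the same unproved separation in miniature. For the proposition as stated, the inclusion argument plus the syntactic observation suffices, and that is all the paper does.
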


\begin{proof}
	See Appendix~\ref{sec:proof:prop:sdpuncsubsetdpunc}
\end{proof}

\begin{proposition}
	(Non-structured) decomposable probabilistic circuits are a proper subset of \dpuncs.
\end{proposition}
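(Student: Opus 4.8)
The plan is to prove the statement in two steps: first an explicit embedding showing the inclusion, then an argument that it is strict. For the inclusion I would show that every (smooth) decomposable probabilistic circuit of Definition~\ref{def:probabilisticcircuit} can be rewritten, unit by unit on the same underlying computation graph, as a \dpunc of Definition~\ref{def:dpunc} that encodes the same distribution. Concretely, a leaf unit computing the mass function $f_k$ over $\Omega(\Xvar_k)$ is mapped to the (classical) POVM $\{\, e_{\xvar_k} := [\,f_k(\xvar_k)\,]\,\}_{\xvar_k \in \Omega(\Xvar_k)}$ of $1{\times}1$ matrices, which is a genuine POVM because $f_k(\xvar_k)\geq 0$ and $\sum_{\xvar_k} f_k(\xvar_k) = 1 = \mathbb{1}_1$; a product unit is mapped to the product unit of Definition~\ref{def:dpunc} (for $1{\times}1$ operands the Kronecker product is ordinary multiplication); and a sum unit with weights $\weight_{kj}$ is mapped to $\sum_{j\in\inputs(k)} \weight_{kj}\,\qop_{kj}(\cdot)$ with every $\qop_{kj}$ taken to be the identity quantum operation (single Kraus operator $[1]$, which is trivially unital). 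Finally I would take the density matrix $\rho = [1]$. Decomposability of the source circuit's product units is inherited verbatim, so the result is a bona fide \dpunc; crucially, if the source circuit is not structured decomposable, neither is its image, which is precisely why \dpuncs — rather than \sdpuncs — are the right target class.

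Correctness then follows from a straightforward structural induction showing $\ocircuit_k(\xvars_k) = [\,\pcircuit_k(\xvars_k)\,]$ for every unit $k$: the base case is the definition of the leaf POVM, the product case is $[\,a\,]\otimes[\,b\,] = [\,ab\,]$ together with decomposability, and the sum case is $\sum_j \weight_{kj}\,\qop_{kj}([\,\pcircuit_j(\xvars_j)\,]) = \sum_j \weight_{kj}\,\pcircuit_j(\xvars_j)$, where smoothness guarantees all inputs share the scope $\xvars_k$. Evaluating at the root gives $p_\Xvars(\xvars) = \Tr[\ocircuit_{root}(\xvars)\,\rho] = \pcircuit_{root}(\xvars)$, the distribution represented by the probabilistic circuit; this is consistent with Theorem~\ref{theo:properprobdpunc}. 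Hence decomposable probabilistic circuits form a subset of \dpuncs.

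For strictness I would exhibit a \dpunc that is not the image of any (reasonably sized) decomposable probabilistic circuit. The cleanest route combines Proposition~\ref{prop:sdpuncsubsetdpunc} (\sdpuncs $\subseteq$ \dpuncs) with the pure \sdpuncs of Section~\ref{sec:purestate}, which coincide with PSD circuits and are non-monotone, whereas a decomposable probabilistic circuit is monotone (nonnegative parameters, only sums and products). By \citet{valiant1979negation} and its circuit-class refinement \citet{loconte2025sum}, there are distributions computed by polynomial-size (even structured decomposable) non-monotone circuits that require exponential-size monotone circuits; such a distribution is realized by a \dpunc but by no polynomially sized decomposable probabilistic circuit, so the inclusion is proper. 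If one instead reads ``proper subset'' purely syntactically, it already suffices to observe that a \dpunc carrying a non-diagonal intermediate operator — e.g.\ one with a negative off-diagonal entry — is not of the $1{\times}1$, identity-operation form produced by the embedding.

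The inclusion half is routine once the embedding is written out; the delicate part is the strictness. The main obstacle there is choosing the intended notion of ``proper'' and, for the succinctness reading, transferring the monotone-versus-non-monotone lower bounds of \citet{valiant1979negation,loconte2025sum} — which are stated for circuits computing a fixed polynomial — to the setting of \emph{normalized} distributions represented by decomposable probabilistic circuits, taking care that the normalization constraints do not invalidate the separation.
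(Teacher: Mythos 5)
Your inclusion argument is essentially the paper's proof: the paper simply observes that restricting the circuit elements $\ocircuit_k$ to $1{\times}1$ PSD matrices (i.e.\ nonnegative scalars) turns Kronecker products into ordinary multiplication and quantum operations into identities, so that Definition~\ref{def:dpunc} collapses to Definition~\ref{def:probabilisticcircuit}. Your unit-by-unit embedding with the $1{\times}1$ POVM $\{[f_k(\xvar_k)]\}$, identity Kraus operator $[1]$, and $\rho=[1]$ is the same construction spelled out in more detail, and your structural induction $\ocircuit_k(\xvars_k)=[\pcircuit_k(\xvars_k)]$ makes explicit what the paper leaves implicit.

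Where you go beyond the paper is strictness: the paper's proof stops at the inclusion and does not actually argue that the containment is proper, even though the statement says ``proper subset.'' Your two candidate readings are both reasonable. The syntactic one (a \dpunc whose operators are not $1{\times}1$ is not literally a probabilistic circuit) is immediate but arguably uninteresting. The expressive-efficiency reading via \citet{valiant1979negation} and \citet{loconte2025sum} is the substantive one, and you are right to flag the caveat that those monotone-versus-non-monotone lower bounds are stated for unnormalized polynomials and need to be transferred to normalized distributions; \citet{loconte2025sum} do handle the normalized setting for structured decomposable circuits, which combined with Proposition~\ref{prop:sdpuncsubsetdpunc} would give the separation you want. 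Since the paper itself offers no strictness argument, your care here is a genuine improvement rather than a deviation.
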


\begin{proof}
	This follows trivially from the observation that restricting the circuit elements $\ocircuit_k$ to  $1{\times}1$ dimensional PSD matrices we are left with positive real-valued scalars. In this case the Kronecker product becomes the usual product over the reals and quantum operations simplify to identity operations, and Definition~\ref{def:dpunc} is equivalent to that of a probabilistic circuit (Definition~\ref{def:probabilisticcircuit}).
\end{proof}

\begin{proposition}
	\dpuncs allow for tractable marginalization.
\end{proposition}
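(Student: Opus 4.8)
The plan is to lift the standard tractable-marginalization argument for decomposable probabilistic circuits to the matrix-valued setting, exactly in the spirit of Proposition~\ref{prop:efficientmarg_sdpunc}: the role played there by the normalization $\sum_{\xvar_k} f_{\xvar_k}=1$ is now played by the POVM condition $\sum_{\xvar_k} e_{\xvar_k}=\mathbb{1}$, and the scalar product is replaced by the Kronecker product. Fix a \dpunc $\ocircuit(\xvars)$, a density matrix $\rho$, and a subset $\Yvars\subseteq\Xvars$ to be marginalized, and write $\Zvars=\Xvars\setminus\Yvars$. By Theorem~\ref{theo:properprobdpunc} the joint is $p_\Xvars(\zvars,\yvars)=\Tr[\ocircuit(\zvars,\yvars)\,\rho]$, so by linearity of the trace
\[
\sum_{\yvars\in\Omega(\Yvars)} p_\Xvars(\zvars,\yvars) = \Tr\!\Bigl[\Bigl(\sum_{\yvars\in\Omega(\Yvars)}\ocircuit(\zvars,\yvars)\Bigr)\rho\Bigr],
\]
so it suffices to show that the matrix $M(\zvars):=\sum_{\yvars}\ocircuit(\zvars,\yvars)$ can be computed in time polynomial in the size of the circuit, after which one matrix product and one trace with $\rho$ finish the job.

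I would prove this by structural induction along a single bottom-up pass. For every unit $k$ let $M_k$ denote the partial marginal $\sum_{\yvars\cap\scope(k)}\ocircuit_k(\xvars_k)$, viewed as a function of the assignment to the non-marginalized variables in $\scope(k)$. The claim is that $M_k$ obeys the following local recursion. For a leaf unit $k$, whose scope is a single variable $\Xvar_k$: if $\Xvar_k\in\Yvars$ set $M_k=\sum_{\xvar_k\in\Omega(\Xvar_k)} e_{\xvar_k}=\mathbb{1}$ by the POVM property, and otherwise $M_k=e_{\xvar_k}$. For a product unit with inputs $k_l,k_r$, decomposability gives $\scope(k_l)\cap\scope(k_r)=\emptyset$, so the summation over $\yvars$ factors into independent summations over $\yvars\cap\scope(k_l)$ and $\yvars\cap\scope(k_r)$; since the Kronecker product is bilinear, $M_k=M_{k_l}\otimes M_{k_r}$. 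For a sum unit, smoothness makes all inputs $j\in\inputs(k)$ share one scope and hence one summation range, and since each $\qop_{kj}(\cdot)=\sum_i K_i(\cdot)K_i^*$ is linear in its argument, the summation commutes past $\qop_{kj}$ and past the convex combination, so $M_k=\sum_{j\in\inputs(k)} w_{kj}\,\qop_{kj}(M_j)$. Applying these rules from the leaves up to the root computes $M(\zvars)$.

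Finally I would account for the running time: each update is a bounded number of elementary matrix operations --- a Kronecker product, or a weighted sum of Kraus conjugations $\sum_i K_i M_j K_i^*$ --- each of which is polynomial in the dimensions of the matrices and the number of Kraus operators involved, and these quantities are part of the description of the circuit; hence the whole pass, and therefore $M(\zvars)$ together with the final trace, runs in time polynomial in the size of the circuit. The only point that needs genuine care is that a \emph{single} bottom-up evaluation still suffices when the circuit is merely decomposable rather than structured decomposable; this is precisely the classical observation for probabilistic circuits that plain marginalization (unlike, say, forming products of circuits) requires only decomposability and smoothness \citep{choi2020probabilistic}, and it transfers verbatim here because the Kronecker product, just like scalar multiplication, distributes over a sum indexed by a disjoint pair of variable blocks. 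I expect this scope-and-dimension bookkeeping, rather than the algebraic identities themselves, to be the main (and still routine) obstacle.
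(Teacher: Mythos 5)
Your proof is correct and follows essentially the same route as the paper's, which simply defers to the sketch of Proposition~\ref{prop:efficientmarg_sdpunc}: push the marginalizing sums down to the leaves (where the POVM condition yields $\mathbb{1}$), exploiting bilinearity of the Kronecker product over disjoint scopes at product units and linearity of the quantum operations at sum units, all in a single polynomial-time bottom-up pass. Your version is a welcome fleshing-out of the details the paper leaves implicit, but it is not a different argument.
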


\begin{proof}
	The proof follows a similar rationale as the one for Proposition~\ref{prop:efficientmarg_sdpunc}
\end{proof}

\begin{figure}
	\centering

	\resizebox{0.8\columnwidth}{!}{
	\tikzset{point/.style={circle,inner sep=0pt,minimum size=3pt,fill=red}}
\centering

\begin{tikzpicture}

	\sumnode[line width=\midlinewidth]{v11};
	\sumnode[line width=\midlinewidth, right=\halfdist of v11]{v12};
	\sumnode[line width=\midlinewidth, right=\halfdist of v12]{v13};
	\sumnode[line width=\midlinewidth, right=\halfdist of v13]{v14};
	\sumnode[line width=\midlinewidth, right=\middist of v14]{v21};
	\sumnode[line width=\midlinewidth, right=\halfdist of v21]{v22};
	\sumnode[line width=\midlinewidth, right=\halfdist of v22]{v23};
	\sumnode[line width=\midlinewidth, right=\halfdist of v23]{v24};

	\prodnode[line width=\midlinewidth, above=\smalldist of v14]{p121};
	\prodnode[line width=\midlinewidth, right=\halfdist of p121]{p122};
	\prodnode[line width=\midlinewidth, right=\halfdist of p122]{p123};
	\prodnode[line width=\midlinewidth, right=\halfdist of p123]{p124};

	\sumnode[line width=\midlinewidth, above=\smalldist of p121]{s121};
	\sumnode[line width=\midlinewidth, above=\smalldist of p122]{s122};
	\sumnode[line width=\midlinewidth, above=\smalldist of p123]{s123};
	\sumnode[line width=\midlinewidth, above=\smalldist of p124]{s124};

	\edge[line width=\midlinewidth,dashed] {p121} {v11, v21};
	\edge[line width=\midlinewidth,dashed] {p122} {v12, v22};
	\edge[line width=\midlinewidth,dashed] {p123} {v13, v23};
	\edge[line width=\midlinewidth,dashed] {p124} {v14, v24};

	\edge[line width=\midlinewidth,left] {s121, s122, s123, s124} {p121, p122, p123, p124};

	\sumnode[line width=\midlinewidth, right=\middist of v24]{v31};
	\sumnode[line width=\midlinewidth, right=\halfdist of v31]{v32};
	\sumnode[line width=\midlinewidth, right=\halfdist of v32]{v33};
	\sumnode[line width=\midlinewidth, right=\halfdist of v33]{v34};
	\sumnode[line width=\midlinewidth, right=\middist of v34]{v41};
	\sumnode[line width=\midlinewidth, right=\halfdist of v41]{v42};
	\sumnode[line width=\midlinewidth, right=\halfdist of v42]{v43};
	\sumnode[line width=\midlinewidth, right=\halfdist of v43]{v44};

	\prodnode[line width=\midlinewidth, above=\smalldist of v33]{p341};
	\prodnode[line width=\midlinewidth, right=\halfdist of p341]{p342};
	\prodnode[line width=\midlinewidth, right=\halfdist of p342]{p343};
	\prodnode[line width=\midlinewidth, right=\halfdist of p343]{p344};

	\sumnode[line width=\midlinewidth, above=\smalldist of p341]{s341};
	\sumnode[line width=\midlinewidth, above=\smalldist of p342]{s342};
	\sumnode[line width=\midlinewidth, above=\smalldist of p343]{s343};
	\sumnode[line width=\midlinewidth, above=\smalldist of p344]{s344};

	\edge[line width=\midlinewidth,dashed] {p341} {v31, v41};
	\edge[line width=\midlinewidth,dashed] {p342} {v32, v42};
	\edge[line width=\midlinewidth,dashed] {p343} {v33, v43};
	\edge[line width=\midlinewidth,dashed] {p344} {v34, v44};

	\edge[line width=\midlinewidth,left] {s341, s342, s343, s344} {p341, p342, p343, p344};

	\prodnode[line width=\midlinewidth, above right= 1.5*\middist and 17.5pt of s124]{p12341};
	\prodnode[line width=\midlinewidth, right=\halfdist of p12341]{p12342};
	\prodnode[line width=\midlinewidth, right=\halfdist of p12342]{p12343};
	\prodnode[line width=\midlinewidth, right=\halfdist of p12343]{p12344};

	\sumnode[line width=\midlinewidth, above right=\smalldist and -1.5pt of p12342]{s12341};

	\edge[line width=\midlinewidth,dashed] {p12341} {s121, s341};
	\edge[line width=\midlinewidth,dashed] {p12342} {s122, s342};
	\edge[line width=\midlinewidth,dashed] {p12343} {s123, s343};
	\edge[line width=\midlinewidth,dashed] {p12344} {s124, s344};

	\edge[line width=\midlinewidth,left] {s12341} {p12341, p12342, p12343,p12344};


	\varnode[line width=\midlinewidth, below=\smalldist of v12]{var11}{$x^0_1$};
	\varnode[line width=\midlinewidth, below=\smalldist of v13]{var12}{$x^1_1$};

	\edge[line width=\midlinewidth,left] {var11} {v11, v12, v13, v14};
	\edge[line width=\midlinewidth,left] {var12} {v11, v12, v13, v14};

	\varnode[line width=\midlinewidth, below=\smalldist of v22]{var21}{$x^0_2$};
	\varnode[line width=\midlinewidth, below=\smalldist of v23]{var22}{$x^1_2$};

	\edge[line width=\midlinewidth,left] {var21} {v21, v22, v23, v24};
	\edge[line width=\midlinewidth,left] {var22} {v21, v22, v23, v24};

	\varnode[line width=\midlinewidth, below=\smalldist of v32]{var31}{$x^0_3$};
	\varnode[line width=\midlinewidth, below=\smalldist of v33]{var32}{$x^1_3$};

	\edge[line width=\midlinewidth,left] {var32} {v31, v32, v33, v34};
	\edge[line width=\midlinewidth,left] {var31} {v31, v32, v33, v34};

	\varnode[line width=\midlinewidth, below=\smalldist of v42]{var41}{$x^0_4$};
	\varnode[line width=\midlinewidth, below=\smalldist of v43]{var42}{$x^1_4$};

	\edge[line width=\midlinewidth,left] {var42} {v41, v42, v43, v44};
	\edge[line width=\midlinewidth,left] {var41} {v41, v42, v43, v44};

\end{tikzpicture}
}
	\caption{
		Graphical representation of a structured decomposable probabilistic circuit. We have four binary random variables in the leaves at the bottom. These are first passed individually through a set of sum units. In the second layer of the circuit we then have two blocks of computation units $\kappa_1$, $\kappa_2$ with scopes ($\scope(\kappa_1)= \{ \Xvar_1, \Xvar_2  \}$ and $\scope(\kappa_2)= \{ \Xvar_3, \Xvar_4  \}$). At the root we have a block of computations $\kappa_{root}$ with $\scope(\kappa_{root}) = \{ \Xvar_1, \Xvar_2, \Xvar_3, \Xvar_4 \}$. One can easily see how such a circuit maps on to a partition circuits by associating each block of computation units to a unit in a partition circuit, \cf Figure~\ref{fig:circuit}.
	}
	\label{fig:sdcircuit}
\end{figure}
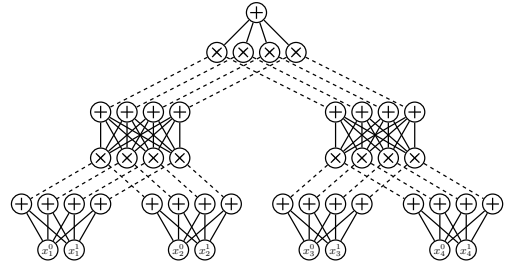

To the best of our knowledge, we present with \dpuncs the first non-monotone tractable circuit class that encodes (together with a denisty matrix) a positive function and that does not adhere to structured decomposability but only (non-structured) decomposability, regardless of the input.
This means that similarly to how non-monotone structured decomposable circuits, \eg \sdpuncs and the works of \citep{sladek2023encoding,loconte2025sum,wangrelationship}, generalize the language of (weighted) \textit{SDNNFs} \citep{pipatsrisawat2008new}, \dpuncs generalize the strictly larger language of \textit{DNNFs} \citep{darwiche2001decomposable}.


\section{Related Work}
\label{sec:related}
\puncs extend recent advancements in the probabilistic circuit literature (PSD circuits \citep{sladek2023encoding}, SOCS \citep{loconte2025sum}, and inception circuits \citep{wangrelationship}),
which in turn extend the traditionally monotone circuits to the non-monotone setting.\footnote{We refer the reader to \citep[Section 5]{loconte2025sum} for a discussion on the relationship between inception circuits and SOCS.}.
The main differentiator of \puncs with regard to these works is that we can use decomposability instead of structured decomposability.


Furthermore, positive unital circuits provide also a different perspective on constructing non-monotone circuits. While the methods described in ~\citep{sladek2023encoding,loconte2024subtractive,loconte2025sum,wangrelationship} regard such circuits as sum of (nested) squares, we interpret them as probabilistic events described by positive semi-definite matrices that are combined within a circuit using unit preserving quantum operations. As such, we also establish a strong link between quantum information theory and the circuit literature.

As already pointed out by \citet{loconte2024subtractive,loconte2025sum,loconte2025relationship} structured decomposable circuits share many aspects with \textit{tensor networks}~\citep{orus2014practical,white1992density} -- a class of statistical models developed in the condensed matter physics community, which have in recent years also been applied to supervised and unsupervised machine learning~\citep{cheng2019tree,han2018unsupervised,stoudenmire2016supervised}. In this regard, and given that tensor networks originate in the physics community, it is rather surprising that tensor networks have so far not been formulated using POVMs and quantum information theory.

First results on the expressive power of tensor networks and by extension of non-monotone circuits, were presented in the tensor network literature in the context of tensor decompositions \citep{glasser2019expressive} and complex-valued hidden Markov models \citep{gao2022enhancing}. Recently, the works of \citet{loconte2024subtractive,loconte2025sum} and \citet{wangrelationship} have studied the relationship of different circuit classes more carefully, as well. Additionally, \citet{loconte2025sum} pointed out links between tensor networks and the circuit literature and were able to generalize earlier results from the tensor network literature by~\citet{glasser2019expressive}.
As the circuits studied by \citet{loconte2025sum} and \citet{wangrelationship} are special cases of \dpuncs their results do also partially apply to \dpuncs.

Lastly, we point out theoretical results in the statistical relational AI literature. Specifically, \citet{buchman2017rules}, and \citet{kuzelka2020complex} noted that using only real-valued parametrizations (including negatives~\citep{buchman2017negative}), does not allow for fully expressive models.

\section{Conclusions \& Future Work}
\label{sec:conclusions}

Based on first principles from quantum information theory, we constructed positive unital circuits -- a novel class of probabilistic tractable models (Section~\ref{sec:puncs}).
In a first instance, we then showed how structured decomposable \puncs encompass all existing non-monotone circuit classes by enforcing certain constraints on the functional form of the quantum operations (Section ~\ref{sec:special_cases}).
Then we continued in Section~\ref{sec:nsdnmcircuit} with showing how the formalism of
positive unital circuits is effortlessly relaxed to (non-structured) decomposable \puncs -- thereby creating a new circuit class.

In future work we would like to investigate in detail the expressive power of \dpuncs compared to \sdpuncs and (non-structured) decomposable probabilistic circuits. Specifically, we make the following conjecture.

\begin{conjecture}
	There is an exponential separation in expressive efficiency between \dpuncs and \sdpuncs.
\end{conjecture}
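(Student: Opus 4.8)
The plan is to mirror, for positive unital circuits, the classical exponential separation between $\mathrm{DNNF}$ and structured $\mathrm{DNNF}$~\citep{pipatsrisawat2008new,darwiche2001decomposable}, the key extra ingredient being a rank lower bound for \sdpuncs that is insensitive to non-monotonicity and to the matrix/quantum-operation structure. I would organise the argument around three claims: (1) a \emph{rank lemma} for \sdpuncs; (2) an \emph{upper bound} giving a family of distributions with polynomial-size \dpuncs; (3) a \emph{best-partition lower bound} showing the same family requires exponential-size \sdpuncs.

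\emph{Rank lemma.} Consider an \sdpunc with underlying partition circuit $\mathcal{T}$ computing $p_\Xvars(\xvars)=\Tr[\Ocircuit_{root}(\xvars)\rho]$, and an internal node $v\in\mathcal{T}$ whose operator $\Ocircuit_v(\xvars_v)$ is $B_v\times B_v$. Walking from $v$ to the root, each step takes the Kronecker product with an operator depending only on $\xvars\setminus\xvars_v$ and then applies a quantum operation; finally one traces against $\rho$. Since every such step is linear in its argument, for each fixed value of $\xvars\setminus\xvars_v$ the map $\Ocircuit_v(\xvars_v)\mapsto p_\Xvars(\xvars)$ is a \emph{fixed} linear functional on $B_v\times B_v$ matrices. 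Hence the value matrix $M^{v}$ of $p_\Xvars$ (rows indexed by $\xvars_v$, columns by $\xvars\setminus\xvars_v$) factors as $M^{v}=V^{*}W$ with the columns of $V$ equal to $\operatorname{vec}(\Ocircuit_v(\xvars_v))\in\mathbb{C}^{B_v^{2}}$, so $\operatorname{rank}(M^{v})\le B_v^{2}$, at most the square of the circuit size. Crucially this uses only linearity: subtractions, complex phases, the Kraus/squaring structure and $\rho$ are all folded into the fixed functional and cannot lower $\operatorname{rank}(M^{v})$. (Reading the same factorisation through the PSD lens of \citet{loconte2025sum} bounds the PSD rank of $M^{v}$ by $B_v$; since $\operatorname{rank}\le(\text{PSD rank})^{2}$, the two routes are equivalent for our purposes.)

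\emph{Upper and lower bounds.} Let the variables be the cells of a $\sqrt n\times\sqrt n$ Boolean grid and set $p_n=\tfrac12 p_{\mathrm{row}}+\tfrac12 p_{\mathrm{col}}$, where $p_{\mathrm{row}}$ (resp.\ $p_{\mathrm{col}}$) is uniform over the assignments constant along every row (resp.\ column). Each of $p_{\mathrm{row}},p_{\mathrm{col}}$ is realised by an $O(n)$-size smooth structured decomposable probabilistic circuit (respectively with a ``rows-first'' and a ``columns-first'' partition tree), so $p_n$ is an $O(n)$-size smooth \emph{decomposable} probabilistic circuit, hence — since decomposable probabilistic circuits embed in \dpuncs — an $O(n)$-size \dpunc. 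For the lower bound, take any \sdpunc computing $p_n$ with partition tree $\mathcal{T}$; the standard balancing argument yields a node $v$ with $n/3\le|\scope(v)|\le 2n/3$, inducing a balanced bipartition $(A,\bar A)$ of the grid. A short geometric argument shows every balanced $A$ has $\Omega(\sqrt n)$ ``split'' rows or $\Omega(\sqrt n)$ ``split'' columns (lines with cells on both sides): otherwise almost every line lies wholly in $A$ or wholly in $\bar A$, a full $A$-row meeting a full $\bar A$-column is a contradiction, and placing almost all full lines on one side violates balance. Suppose there are $t=\Omega(\sqrt n)$ split rows (the other case is symmetric). Restricting $M^{v}$ to the rows $\xvars_A$ that cannot be extended to a column-constant assignment annihilates the $M_{\mathrm{col}}$ contribution entirely, and among those rows one can still realise all $2^{t}$ split-row value patterns on both sides, so the restricted $M^{v}$ is, up to scaling, a $2^{t}\times 2^{t}$ identity; thus $\operatorname{rank}(M^{v})\ge 2^{t}=2^{\Omega(\sqrt n)}$. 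Combined with the rank lemma this forces circuit size $2^{\Omega(\sqrt n)}$, an exponential separation.

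\emph{Main obstacle.} The hard part is the lower-bound half, and within it two points. First, controlling cancellation in $M_{\mathrm{row}}+M_{\mathrm{col}}$ \emph{uniformly over all balanced cuts} is what makes the grid family awkward (disjoint supports alone do not prevent rank collapse, which is why the sub-rectangle argument above is needed); a cleaner route — superimposing $\Theta(n)$ mutually incompatible pairings, or an expander-based Tseitin-style gadget — would both simplify this step and upgrade the bound to $2^{\Omega(n)}$, at the cost of re-verifying that such a family still admits a polynomial \dpunc. Second, one must confirm the rank lemma against the full generality of Definition~\ref{def:dpunc} under structured decomposability (arbitrary within-block sums, unital quantum operations, a global density matrix); the linearity argument above is designed precisely to cover this, but reconciling it with the weighted-$\mathrm{SDNNF}$ lower bounds of \citet{pipatsrisawat2008new} and the PSD-rank bounds of \citet{loconte2025sum} is where care is required. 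Accordingly the statement is, for now, left as a conjecture.
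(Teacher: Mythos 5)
The paper does not prove this statement: it is explicitly posed as a conjecture for future work, so there is no proof of record to compare your attempt against. Your proposal is therefore best judged on its own terms, and on those terms it is a sensible plan rather than a proof. The parts that are solid: the rank lemma is correct and is the right tool --- for any node $v$ of the partition circuit of an \sdpunc, the map $\Ocircuit_v(\xvars_v)\mapsto\Tr[\Ocircuit_{root}(\xvars)\rho]$ is, for fixed $\xvars\setminus\xvars_v$, a fixed linear functional on $B_v\times B_v$ matrices (Kronecker product with the sibling, the Kraus maps, and the trace against $\rho$ are all linear), so the communication matrix has rank at most $B_v^2$ regardless of signs, phases, or the unitality constraints; the upper bound via a mixture of a rows-first and a columns-first monotone structured circuit is immediate from the paper's own embedding of decomposable PCs into \dpuncs; and the geometric claim that every balanced cut of the grid splits $\Omega(\sqrt n)$ rows or $\Omega(\sqrt n)$ columns is correct (a full $A$-row and a full $\bar A$-column would intersect, so few split lines forces almost all cells onto one side, contradicting balance).

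The genuine gap is exactly where you locate it: the claim that, after restricting to $A$-side assignments that are non-constant on some column fragment, the surviving submatrix of $\tfrac12 M_{\mathrm{row}}+\tfrac12 M_{\mathrm{col}}$ contains a $2^{\Omega(\sqrt n)}$-dimensional identity block. This requires simultaneously (i) killing the $M_{\mathrm{col}}$ contribution on every chosen row index, (ii) keeping the non-split rows constant so the $M_{\mathrm{row}}$ support is nonempty, and (iii) retaining $2^{\Omega(t)}$ freedom on the split rows on \emph{both} sides of the cut --- and this must work uniformly over all balanced cuts, including degenerate ones where the split rows meet $A$ in single cells or where no column has two cells in $A$. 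Until that combinatorial step is carried out in full (or replaced by a cleaner fooling-set family, as you suggest), the lower bound half is not established, and the statement rightly remains a conjecture. I would add one further caution: your argument, if completed, would separate \dpuncs from \sdpuncs \emph{for a fixed partition tree per circuit}; since that is what Definition~\ref{def:partition_circuit} imposes, it suffices here, but you should state explicitly that the $2^{\Omega(\sqrt n)}$ rank bound is quantified over all vtrees, not just one.
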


Effectively, this would expand the research initiated by \citet{loconte2024subtractive} for mapping out the relationships between the different structured decomposable circuits. Ideally, one would like to create an analogue to \citeauthor{darwiche2002knowledge}' knowledge compilation map \citep{darwiche2002knowledge} but for positive operator circuits and relate it to the framework of algebraic model counting \citep{kimmig2017algebraic}.
This would also allow for establishing possible separations between sum-of-squares PCs~\citep{loconte2024subtractive}, product of monotonic by SOCS~\citep[Definition 5]{loconte2024subtractive}, and inception networks~\citep{wangrelationship}.
A first discussion on these issues can also be found in \citep[Appendix A]{wangrelationship}.


A more audacious goal would then be to investigate whether \puncs can be run efficiently on quantum computers and whether there are speed-ups to be had. We can formulate this question more directly as a conjecture,

\begin{conjecture}
	There are circuit-query pairs that are tractable on a quantum computer but not on a classical computer.
\end{conjecture}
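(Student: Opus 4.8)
An unconditional proof would settle whether $\mathrm{BQP}\subseteq\mathrm{BPP}$, so the realistic target is a \emph{conditional} separation: an explicit family of \puncs together with a query that a quantum computer answers in time polynomial in the circuit size, while no classical algorithm does unless a widely believed complexity assumption fails. The first task is to fix the query model so the statement is non-vacuous. For \puncs of bounded ``bond dimension'' the evaluation of $p_\Xvars(\xvars){=}\Tr[\Ocircuit(\xvars)\rho]$ is already classically tractable (essentially a matrix-product-state contraction), whereas \emph{exact} evaluation in the general case is \#P-hard and hence not quantumly tractable either. The right regime is therefore \puncs whose quantum operations $\qop_k$ are specified succinctly -- as poly-size quantum circuits -- although they act on exponentially large matrices, with the query being additive-$\epsilon$ estimation of $p_\Xvars(\xvars)$ (or of a marginal), or a draw $\xvars\sim p_\Xvars$.

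\textbf{Quantum tractability.} This is the easy half: every poly-size quantum circuit is, in essence, a poly-size \punc equipped with such a query. Unitary gates $\rho\mapsto V\rho V^*$ are unital quantum operations, since $V\mathbb{1}V^*{=}\mathbb{1}$ (cf.\ Definition~\ref{def:cp_unital}); tensoring qubit registers is exactly a product unit; computational-basis inputs are encoded by leaf POVMs of the form $\{e_{\xvar_k}\otimes e_{\xvar_k}^*\}_{\xvar_k}$, which sum to $\mathbb{1}$, and the readout by the density matrix $\rho$ in $\Tr[\Ocircuit(\xvars)\rho]$. A quantum computer then answers the query by running the circuit and estimating the expectation value with a Hadamard test (additive error $\epsilon$ in $\mathrm{poly}(1/\epsilon)$ time, or $\mathrm{poly}(\log(1/\epsilon))$ with amplitude estimation), or by measuring in the computational basis to sample. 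The only care needed here is threading the sub-unital -- rather than trace-preserving -- defect $\mathbb{1}{-}\sum_j\kraus_j^*\kraus_j$ of the operations in Definition~\ref{def:cp_unital} through a Stinespring dilation, \eg by carrying the missing weight in a flag qubit.

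\textbf{Classical hardness.} I would instantiate the planted circuit as either a BQP-complete promise problem or an IQP / random-circuit family with anti-concentration and average-case hardness of its output probabilities. A poly-time classical estimator for the $\epsilon$-query would then place $\mathrm{BQP}\subseteq\mathrm{BPP}$, and a poly-time classical sampler for $p_\Xvars$ would collapse the polynomial hierarchy, by the standard quantum-advantage arguments; either conclusion contradicts the corresponding standard assumption and delivers the conditional separation.

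\textbf{Main obstacle.} Besides the inherently conditional character of the claim, the technical crux is making the embedding respect \emph{decomposability}. \sdpuncs are tied to a fixed partition tree, so distributing a quantum circuit across their internal units forces a matrix-product-state-like, quasi-one-dimensional gate layout; one must therefore either plant a circuit family whose hardness is known to persist in one dimension -- 1D IQP, or measurement-based computation on a 1D cluster state -- or exploit the extra slack of \dpuncs, which require only (non-structured) decomposability and so accommodate more general connectivity. This is why a separation built on \dpuncs looks the more promising route, and it dovetails with the earlier conjecture separating \dpuncs from \sdpuncs. Throughout, one must verify that the ``scope'' bookkeeping needed to make the construction a \emph{syntactically valid} \punc -- genuine POVM leaves, genuinely unital operations, convex weights summing to one -- does not quietly trivialize the planted computation.
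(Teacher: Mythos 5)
There is nothing in the paper to compare your attempt against: this statement is posed as an open \emph{conjecture} in the concluding section, and the author offers no proof, only the remark that a positive answer would likely route through the quantum Fourier transform as in Shor's algorithm. So the question is not whether you matched the paper's argument but whether your sketch actually constitutes a proof. It does not, and you are candid about this: what you give is a research plan whose two halves (quantum tractability of an embedded circuit family, conditional classical hardness via BQP-completeness, IQP, or random-circuit sampling) are each standard, but whose load-bearing step -- exhibiting a concrete family of \puncs and a concrete query satisfying both halves simultaneously -- is deferred. Your hardness route (sampling/estimation hardness and PH collapse) is also genuinely different from the paper's hinted route (period finding), which is worth noting but does not by itself close the gap.

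The most substantive issue is one you brush against in your first paragraph but do not resolve: under the paper's own definitions, the conjecture is nearly vacuous in the direction you set up. A \punc as defined here is an explicit computation graph whose quantum operations are \emph{assumed} polytime computable (this assumption is what drives the tractable-marginalization proposition), so evaluating $\Tr[\Ocircuit(\xvars)\rho]$ and marginalizing are classically polytime for every syntactically valid \punc, and no separation is possible in that regime. Your fix -- letting the $\qop_k$ be specified succinctly as poly-size quantum circuits acting on exponentially large matrices -- changes the object of study: the resulting ``circuit'' is no longer a \punc in the paper's sense, and the conjecture would have to be restated for that enlarged model before your plan even applies. Relatedly, your claim that ``every poly-size quantum circuit is, in essence, a poly-size \punc'' is not established: the \punc syntax forces POVM-element leaves indexed by input variables, Kronecker products over disjoint scopes, and unital (not trace-preserving) maps at sum units, and whether an arbitrary gate layout survives the decomposability and scope bookkeeping with only polynomial overhead is exactly the obstacle you name at the end without overcoming. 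As a plan the proposal is sensible; as a proof it is incomplete, and the conjecture remains open.
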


We stipulate that a positive answer to this question would involve Fourier transforms as they are the key mechanism underlying Shor's algorithm \citep{shor1994algorithms} leading to exponential quantum speed-up.
Note that this question seems also related to the work of \citet{riguzzi2024quantum}. Although the question in that work is on performing a computationally hard weighted model count using Grover's algorithm \citep{grover1996fast}, which does not provide an exponential speed-up.

Finally, a more practical research avenue consists of finding parametrizations of \puncs that sidestep the expensive dense matrix-matrix multiplications and replace them with structured sparse matrices. Monarch matrices seem to hold promise \citep{dao2022monarch,ZhangCoLoRAI25b}.

\begin{acknowledgements} 
	The work was supported by the Wallenberg AI, Autonomous Systems and Software Program (WASP) funded by the Knut and
	Alice Wallenberg Foundation.
\end{acknowledgements}

\bibliography{references}

\clearpage

\onecolumn

\title{A Quantum Information Theoretic Approach to Tractable Probabilistic Models\\Appendix}
\appendix
\maketitle

\section{Proofs for Section~\ref{sec:qit}}

\subsection{Proof of Proposition \ref{prop:povmprob}}
\label{sec:proof:prop:povmprob}

\proppovmprob*

\begin{proof}
	While this is a well-known result we were not able to identify a concise proof in the literature and provide therefore, for the sake of completeness, one here. To this end, we
	first show that $p(i)\geq 1$, for each $i$.
	\begin{align}
		p(i)
		=
		\Tr [ E(i) \rho ]
		 &
		=
		\Tr [ D(i) D^*(i)  \rho ]
		\\
		 &
		=
		\Tr [ D(i)^*  \rho D(i) ]. \nonumber
	\end{align}
	Here we used the fact that $E(i)$ is PSD and factorized it into the product $D(i) D^*(i)$. Then we used the fact that the trace is invariant under cyclical shifts.
	Clearly, the matrix $ D(i)^*  \rho D(i)$ is PSD as we have for every vector $x$:
	\begin{align}
		x^* D(i)^*  \rho D(i) x = y^* \rho y \geq 0 \quad \text{with } y = Dx.
		\label{eq:def:psdPOVM}
	\end{align}
	As the trace of a PSD matrix is positive we have that $p(i)$ is a positive real number, \ie $p(i){\geq} 0$ for every $i$.

	Secondly, we show that $p(i)$ is normalized:
	\begin{align}
		\sum_{i=1}^N p(i)
		=
		\sum_{i=1}^N \Tr [ E(i) \rho ]
		=
		\Tr [ \sum_{i=1}^N E(i) \rho ]
		{=}
		\Tr [  \rho ],
	\end{align}
	where we used Equation~\ref{eq:povm_normalized}.
	Exploiting the fact that the trace of a density matrix is $1$ gives us indeed $\sum_{i=1}^N p(i)=1$, and we can conclude that $p(i)$ is a valid probability distribution.
\end{proof}

\section{Proofs for Section \ref{sec:puncs}}

\subsection{Proof of Proposition \ref{prop:qopunital}}
\label{sec:proof:prop:qopunital}

\propqopunital*

\begin{proof}
	The (sufficient and necessary) condition that $\sum_j \kraus_j^* \kraus_j \leq \mathbb{1}$ stems from fact that we wish to bound the probability of a state $\qop (E(i))$ to be less than or equal to $1$, \cf \citep[Proof of Theorem 8.1]{nielsen2001quantum}. That is, we wish to have:
	\begin{align}
		\Tr [ \qop (E(i)) \rho] \leq 1. \label{eq:proof:unitalvalid:bound}
	\end{align}
	We will now show that for unital quantum operations this holds by construction and that the condition $\sum_j K_j^* K_j$ is implied.

	We start with the probability for the state $\sum_i E(i)=\mathbb{1}$:
	\begin{align}
		\Tr [ \qop (\mathbb{1}) \rho]
		=
		\Tr [ \sum_j \kraus_j  \mathbb{1} \kraus_j^* \rho ]
		=
		\Tr [ \sum_j \kraus_j  \kraus_j^* \rho ]
		=
		\Tr [  \rho ] =1
	\end{align}
	Alternatively, we write this also as:
	\begin{align}
		\Tr [ \qop (\mathbb{1}) \rho]
		=
		\Tr [ \sum_j \kraus_j  \mathbb{1} \kraus_j^* \rho ]
		=
		\sum_j \Tr [ \kraus_j  \mathbb{1} \kraus_j^* \rho ]
		=
		\sum_j \Tr [\gamma^* \kraus_j  \mathbb{1} \kraus_j^* \gamma ]
	\end{align}
	with $\rho=\gamma \gamma^*$.

	Similarly, we also write for the probability of the arbitrary state $\sigma$ denoting the sum over any subset of the POVM $\{ E(i) \}_{i=1}^{N}$:
	\begin{align}
		\Tr [ \qop (\sigma) \rho]
		=
		\sum_j \Tr [\gamma^* \kraus_j \sigma \kraus_j^* \gamma ].
	\end{align}
	We now need that $\Tr [ \qop (\sigma) \rho]\leq1$, which is equivalent to:
	\begin{align}
		1 - \Tr [ \qop (\sigma) \rho] \geq 0
		 & \Leftrightarrow
		\Tr [ \qop (\mathbb{1}) \rho] - \Tr [ \qop (\sigma) \rho] \geq 0
		\\
		 & \Leftrightarrow
		\sum_j \Tr [\gamma^* \kraus_j  \mathbb{1} \kraus_j^* \gamma ]
		-
		\sum_j \Tr [\gamma^* \kraus_j \sigma \kraus_j^* \gamma ] \geq 0
		\\
		 & \Leftrightarrow
		\sum_j \Tr \Bigl[\gamma^* \kraus_j  \Bigl( \mathbb{1} -\sigma \Bigr)  \kraus_j^* \gamma \Bigr] \geq 0
	\end{align}
	The last line only holds if $\mathbb{1}- \sigma$ is PSD, which is indeed the case as $\sigma$ only sums over a subset of the POVM. Subtracting the sum of this subset from $\mathbb{1}$ leaves us with a sum over the remaining elements of the POVM. As this is a sum over PSD matrices the sum over the remaining elements is again PSD.
	This concludes the proof as we have shown that Equation~\ref{eq:proof:unitalvalid:bound} is satisfied by construction.
\end{proof}

\subsection{Proof of Proposition \ref{prop:puncPOVM}}
\label{sec:proof:prop:puncPOVM}

\proppuncPOVM*

\begin{proof}
	Given that positive unital circuits are by definition positive operator circuits we already have that:
	\begin{align}
		\forall \xvars \in \Omega(\Xvars): \Ocircuit(\xvars) \text{ is PSD}.
	\end{align}
	Next we show that $\sum_{\xvars\in \Omega(\Xvars)} \Ocircuit(\xvars)=\mathbb{1}$. Here we observe that in the computation units we have:
	\begin{align}
		\sum_{\xvars_k \in \Omega(\Xvars_k)} \Ocircuit_k(\xvars_k)
		 & = \sum_{\xvars_{k_l}} \sum_{\xvars_{k_r}}    \qop(\Ocircuit_k(\xvars_{k_l}) \otimes \Ocircuit_k(\xvars_{k_r})  )
		\\
		 & =
		\qop \left(
		\sum_{\xvars_{k_l}}   \Ocircuit_k(\xvars_{k_l})
		\otimes
		\sum_{\xvars_{k_r}}\Ocircuit_k(\xvars_{k_r})
		\nonumber
		\right)
	\end{align}
	This lets us push down the summation of a specific variable to the corresponding leaf where the variable is given as input, where we then have:
	\begin{align}
		\sum_{\xvars_k \in \Omega(\Xvars_k)} \Ocircuit_k(\xvars_k)
		= \sum_{\xvar_k\in \Omega(X_k)} E_{\xvar_k}
		= \mathbb{1_k}
	\end{align}
	We now exploit that the completely positive maps in a positive unital circuit are unital, which gives us indeed
	$
		\sum_{\xvars\in \Omega(\Xvars)} \Ocircuit(\xvars)=\mathbb{1}.
	$
\end{proof}

\section{Proofs for Section \ref{sec:special_cases}}

\subsection{Proof of Proposition \ref{prop:Oveq}}
\label{sec:proof:prop:Oveq}

\propOveq*

\begin{proof}
	We start the proof at the leaf where we have
	\begin{align}
		\Ocircuit_k(\xvar_k)
		=
		U_k \left( e_{\xvar_k} \otimes e_{\xvar_k}^*  \right) U_k^*
		=
		\left( U_k e_{\xvar_k} \right) \otimes \left( e_{\xvar_k}^* U_k^* \right)
	\end{align}
	and Equation~\ref{eq:def:opvec_equivalent} holds almost by definition. In the computation units into which the leaves feed, we then have
	\begin{align}
		\Ocircuit_k
		 & = U_k \left( \Ocircuit_{k_l} \otimes \Ocircuit_{k_r} \right) U_k^*
		\nonumber
		\\
		 & = U_k \left( (\Vcircuit_{k_l} \otimes \Vcircuit^*_{k_l}) \otimes (\Vcircuit_{k_r} \otimes \Vcircuit^*_{k_r} ) \right) U_k^*
		\nonumber
		\\
		 & = \Bigl( U_k \left( \Vcircuit_{k_l} \otimes \Vcircuit_{k_r} \right) \Bigr)
		\otimes
		\Bigl( \left( \Vcircuit^*_{k_l}  \otimes \Vcircuit^*_{k_r}  \right) U_k^* \Bigr)
		\nonumber
		\\
		 & = v_k \otimes v_k^*.
	\end{align}
	where we omitted the explicit dependencies on the variables $\xvar_k$, $\xvar_{k_l}$, and $\xvar_{k_r}$.
	Repeating this argument recursively until the root of the circuit concludes the proof.
\end{proof}

\subsection{Proof of Proposition \ref{prop:voequiprob}}
\label{sec:proof:prop:voequiprob}

\propvoequiprob*

\begin{proof}
	The proof starts by simply plugging in the vector representation of $\Ocircuit_{\text{root}}$ (obtained in the proof of Proposition~\ref{prop:Oveq}) into the expression $\Tr [\Ocircuit_{\text{root}} \rho]$ and rather straightforwardly get:
	\begin{align}
		\Tr [\Ocircuit_{\text{root}} \rho]
		 &
		=
		\Tr \left[
			\Bigl(\Vcircuit_{\text{root}} \otimes \Vcircuit^*_{\text{root}}   \Bigr) \times \rho
			\right]
		\\
		 & = \Vcircuit^*_{\text{root}} \times  \rho \times\Vcircuit_{\text{root}},
	\end{align}
	which is indeed the same probability as defined in Definition~\ref{def:vpoc}.
\end{proof}

\subsection{Proof of Proposition \ref{prop:circuit_is_prob}}
\label{app:circuit_is_prob}

\proplpcvalid*

\begin{proof}
	If $\Pcircuit_{root} (\xvars)$ is the computation unit at the root of the layered PC, each entry $i$ of $\Pcircuit_{root} (\xvars)$ forms a probability distribution if $\forall i: \Pcircuit_{root, i}(\xvars)\geq 0$, for every $\xvars \in \Omega(\Xvars)$ and if $\forall i: \sum_{\xvars \in \Omega(\Xvars)} \Pcircuit_{root, i}(\xvars)=1$. The first condition is trivially satisfied as the circuit only performs linear operations on matrices and vectors ($P_{\xvar_k}, W_k$) with positive entries only.
	For the condition $\sum_{\xvars \in \Omega(\Xvars)} \Pcircuit_{root}(\xvars)=1$ we observe that we can simply push down the summation for each variable to the respective leaf unit. This yields the following summation in the leaves:
	\begin{align}
		\sum_{\xvar_k \in \Omega(\Xvar_k)} W_k \times P_{\xvar_k}
		=
		W_k \times \sum_{\xvar_k \in \Omega(\Xvar_k)} P_{\xvar_k}
		=
		W_k \times \eta_{k} = \eta_{k}.
	\end{align}
	Here $\eta$ denotes a vector having as entries only ones (with appropriate dimensions). For the last step in the equation above we exploited the fact that the weight matrices $W_k$ are row-normalized.

	Passing on the marginalized leaves to the parent units gives us:
	\begin{align}
		W_k \times \Bigl( \eta_{k_l} \otimes \eta_{k_r} \Bigr)
		=
		W_k \times \eta_{k}  =  \eta_k
	\end{align}
	Repeating this process until we reach the root will eventually result in $\sum_{\xvars \in \Omega(\Xvars)} \Pcircuit_{root}(\xvars)=\eta_{root}$. This means that all the entries of all the vectors $\pcircuit_k(\xvars_k)$ equal to $1$ when marginalized, which means in turn that the second condition is satisfied.
\end{proof}

\subsection{Proof of Proposition \ref{prop:diagpunc}}
\label{sec:proof:prop:diagpunc}

\propdiagpunc*

\begin{proof}
	The proof is relatively straightforward as in the leaves we already have diagonal matrices by definition. For an internal unit $k$ we have that
	\begin{align}
		\Ocircuit_k = \sum_j J_j D_{kj} \left(\Ocircuit_{kl}\otimes \Ocircuit_{kr} \right)  D_{kj}^* J_j^*{kj},
	\end{align}
	which can be written as
	\begin{align}
		\Ocircuit_k = \sum_j J_j D_{kj} J_j^*{kj},
	\end{align}
	where $D_{kj}$ is diagonal and PSD. Carrying out the remaining matrix products then gives us:
	\begin{align}
		\Ocircuit_k = \sum_j H_j \Tr [D_{kj}],
	\end{align}
	which is clearly a diagonal matrix (and also PSD).
\end{proof}

\subsection{Proof of Proposition \ref{prop:PCPunciso}}
\label{sec:proof:prop:PCPunciso}

\propPCPunciso*

\begin{proof}
	In order to show this we need to show that each computation unit in a probabilistic circuit can be mapped to a computation unit in a diagonal \punc -- and vice versa. Starting at the leaves for both circuits we have
	\begin{align}
		P_{\xvar_k}
	\end{align}
	for the probabilistic circuit and
	\begin{align}
		\Delta_{\xvar_k}  \Delta_{\xvar_k}^*,
	\end{align}
	for the diagonal \punc. Given that $\Delta_{x_k}$ is a diagonal matrix the product $\Delta_{\xvar_k}  \Delta_{\xvar_k}^*$ is also a diagonal matrix with positive entries only. Taking into consideration the completeness constraints $\sum_{x_k} P_{x_k} = \eta_k $ ($\eta$ being a vector with only ones) and $\sum_{x_k} \Delta_{x_k} \Delta_{x_k}^* = \mathbb{1} $ is trivial to see that in the leaves probabilistic circuits can be mapped onto diagonal \puncs and vice versa.

	For the internal computation units we need to show that
	\begin{align}
		\phantom{\Leftrightarrow} &
		\Pcircuit_k
		=
		W_k {\times} \Bigl(\Pcircuit_{k_l} {\otimes} \Pcircuit_{k_r}\Bigr)
		\\
		\Leftrightarrow           &
		\Pcircuit_{ki}
		=
		\sum_j W_{kij} {\times} \Bigl(\Pcircuit_{k_l} {\otimes} \Pcircuit_{k_r}\Bigr)_j
		\label{eq:proof:iso:vecunit}
	\end{align}
	is equivalent to
	\begin{align}
		\Ocircuit_k = \qop (\Ocircuit_{k_l} \otimes \Ocircuit_{k_r})
		=
		\sum_j J_j D_{kj} (\Ocircuit_{k_l} \otimes \Ocircuit_{k_r}) D^*_{kj} J_j^*.
		\label{eq:proof:iso:sumdiag}
	\end{align}
	We first note that for diagonal \puncs all the operators $\Ocircuit_k$ are representable as diagonal matrices. This allows us to rewrite Equation~\ref{eq:proof:iso:sumdiag} as
	\begin{align}
		\Ocircuit_k = \qop (\Ocircuit_{k_l} \otimes \Ocircuit_{k_r})
		=
		\sum_j  J_j  \underbrace{ D_{kj} D^*_{kj}}_{\eqqcolon \Lambda_{kj}} (\Ocircuit_{k_l} \otimes \Ocircuit_{k_r})  J_j^*.
	\end{align}
	When writing out this sum of matrix product on the right-hand side explicitly it is trivial to observe that the diagonal element $\Ocircuit_{kjj}$ can be written as:
	\begin{align}
		\Ocircuit_{kjj}
		=
		\Tr \left[  \Lambda_{kj} \times (\Ocircuit_{k_l} \otimes \Ocircuit_{k_r}) \right],
	\end{align}
	with all the off-diagonal elements being zero. This can in turn be written as
	\begin{align}
		\Ocircuit_{kjj}
		=
		\sum_i  \diagvec ( \Lambda_{kj} )_i  \Bigl( \diagvec (\Ocircuit_{k_l}) \otimes \diagvec (\Ocircuit_{k_r}) \Bigr)_i,
	\end{align}
	where $\diagvec (\Lambda_{kj})$ denotes a vector whose entries corresponds to the diagonal elements of $\Lambda_{kj}$.

	Switching around the names of the indices $i$ and $j$ we can write:
	\begin{align}
		\diagvec (\Ocircuit_{k} )_i
		=
		\sum_j  \diagvec ( \Lambda_{ki} )_j  \Bigl( \diagvec (\Ocircuit_{k_l}) \otimes \diagvec (\Ocircuit_{k_r}) \Bigr)_j.
		\label{eq:proof:iso:opunit}
	\end{align}
	We now identify the expressions in Equation~\ref{eq:proof:iso:vecunit} and Equation~\ref{eq:proof:iso:opunit} as follows with each other:
	\begin{align}
		\diagvec (\Ocircuit_{k} )_i   & = \Pcircuit_{ki}
		\\
		\diagvec (\Lambda_{ki} )_j    & = W_{kij}
		\\
		\diagvec (\Ocircuit_{k_l} )_i & = \Pcircuit_{k_l i}
		\\
		\diagvec (\Ocircuit_{k_r} )_i & = \Pcircuit_{k_r i}.
	\end{align}
	This -- together with the equivalence between $\forall k: \Tr [D_{ki}D^{*}_{ki}]= \Tr [\Lambda_{ki}] =1$ and the row-normalization of the $W_k$'s -- establishes the isomorphism between the two circuit classes.
\end{proof}





\section{Noisy and Block-Diagonal \puncs}
\label{sec:noisyblockdiagonalpuncs}

In order to increase the expressive power of non-monotone circuits, \citet{loconte2025sum} proposed to multiply a pure \punc with a probabilistic circuit.
This was motivated by their observation that the expressive power of monotone and squared circuits (a special case of pure \puncs) are incomparable \citep{decolnet2021compilation}. That is, each of these circuit classes can express circuits that would lead to an exponential blow-up in the respective other circuit class. The idea is relatively simple: they express an unnormalized probability distribution as the product of a probabilistic circuit and a traced \punc:
\begin{align}
	\qcircuit(\xvars) \Tr [\Ocircuit(\xvars) \rho].
\end{align}
They called such a model a \msocs. We will put these models in a quantum information theoretic framework. This will then lead to the realization that \msocss are nothing but block-diagonal operator circuits.

\subsection{Sub-Complete Probability Distributions}

An important concept in quantum information theory that generalizes the standard POVM (\cf Definition~\ref{def:povm}) is the so-called noisy POVM, which enables for instance modelling imperfect measurement devices. In the context of machine learning this might be an improperly labeled data point of an error in the detector, \eg a pixel flip in the camera that took a picture.
\begin{definition}
	\label{def:noisypovm}
	A noisy positive operator-valued measure
	is a set of PSD  matrices $\{E(i)\}_{i=0}^{I-1}$ ($\numevents$ being the number of possible measurement outcomes) such that:
	\begin{talign}
		\sum_{i=0}^{I-1} E(i) = \subcompletemeasrure < \mathbb{1},
	\end{talign}
	where the inequality sign is interpreted using the Loewner ordering of PSD matrices,
	\ie $\subcompletemeasrure < \mathbb{1} \Leftrightarrow\mathbb{1}-\subcompletemeasrure$ is PSD.
\end{definition}

\begin{definition}[Sub-complete Probability Distribution]
	Let $\Xvars$ be a set of random variables with sample space $\Omega(\Xvars)$. We call a probability distribution \textit{sub-complete} if $\forall \xvars \in \Omega(\Xvars): p(\xvars)\geq 0$ and if $\sum_{\xvars \in \Omega(\Xvars)} p(\xvars)\leq 1$. We call a probability distribution \textit{strictly sub-complete} if the latter inequality holds strictly and \textit{complete} if equality holds.
\end{definition}

\begin{restatable}{proposition}{propnoisypovmdist}
	\label{prop:noisypovmdist}
	Noisy POVMs induce a sub-complete probability distributions.
\end{restatable}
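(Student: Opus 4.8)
The plan is to mirror the proof of Proposition~\ref{prop:povmprob}, relaxing the normalization equality $\sum_i E(i) = \mathbb{1}$ to the Loewner inequality $\sum_i E(i) = \subcompletemeasrure$ with $\mathbb{1} - \subcompletemeasrure$ PSD. Fix an arbitrary density matrix $\rho$ and, exactly as in Definition~\ref{def:eventprob}, set $p(i) = \Tr [\rho E(i)]$; I will show that this $p$ is a sub-complete probability distribution, \ie $p(i) \geq 0$ for all $i$ and $\sum_i p(i) \leq 1$.

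First I would verify nonnegativity. This step is verbatim the first half of the proof of Proposition~\ref{prop:povmprob}: since $E(i)$ is PSD it factors as $E(i) = D(i) D(i)^*$, cyclicity of the trace gives $p(i) = \Tr [D(i)^* \rho D(i)]$, the matrix $D(i)^* \rho D(i)$ is PSD (\cf Equation~\ref{eq:def:psdPOVM}), and the trace of a PSD matrix is nonnegative.

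Second I would bound the total mass. Linearity of the trace and Definition~\ref{def:noisypovm} give $\sum_{i=0}^{\numevents-1} p(i) = \Tr [\rho \sum_i E(i)] = \Tr [\rho \subcompletemeasrure]$, so it suffices to show $\Tr [\rho \subcompletemeasrure] \leq 1$. The key observation is that $\mathbb{1} - \subcompletemeasrure$ is PSD by the definition of a noisy POVM; applying the same factorization-plus-cyclic-trace argument as in the first step to the PSD pair $(\mathbb{1} - \subcompletemeasrure, \rho)$ yields $\Tr [\rho (\mathbb{1} - \subcompletemeasrure)] \geq 0$, and since $\Tr [\rho] = 1$ this rearranges to $\Tr [\rho \subcompletemeasrure] \leq 1$, hence $\sum_i p(i) \leq 1$.

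There is essentially no obstacle here: the argument is a one-line perturbation of the POVM case. The only mild redundancy is that the fact ``the trace of a product of two PSD matrices is nonnegative'' is invoked twice, once for $(E(i), \rho)$ and once for $(\mathbb{1} - \subcompletemeasrure, \rho)$; a cleaner write-up would first isolate this as a standalone lemma --- which is already implicit in the proof of Proposition~\ref{prop:povmprob} --- and then reuse it. Note also that equality $\sum_i p(i) = 1$ can occur (\eg when $\rho$ is supported on an eigenspace of $\subcompletemeasrure$ with eigenvalue $1$), so only sub-completeness, not strict sub-completeness, can be claimed in general.
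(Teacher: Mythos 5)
Your proposal is correct and follows essentially the same route as the paper: both reduce the claim to showing $\Tr[\subcompletemeasrure \rho] \leq 1$ via the observation that $\mathbb{1} - \subcompletemeasrure$ is PSD, hence $\Tr[(\mathbb{1}-\subcompletemeasrure)\rho] \geq 0$, combined with $\Tr[\rho]=1$. The only difference is that you also spell out the nonnegativity of each $p(i)$ by reusing the factorization argument from Proposition~\ref{prop:povmprob}, which the paper leaves implicit.
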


\begin{proof}
	Let $\rho$ be a density matrix and $\{E(i)\}_{i=0}^{I-1} $ be a POVM such that $\sum_{i=0}^{I-1} E(i)=M$. We then have
	\begin{align}
		\Tr[(\mathbb{1}-M) \rho]\geq 0,
	\end{align}
	holds as $\mathbb{1}-M$ is PSD by definition.
	Pushing the trace over the subtraction and rearranging terms yields:
	\begin{align}
		\Tr[\mathbb{1} \rho]- 	\Tr[M \rho]\geq 0 \Leftrightarrow \Tr[M \rho]\leq 1.
	\end{align}
	This means that the induced probability distribution $p(i)= \Tr[E(i)\rho]$ is indeed sub-complete.
\end{proof}

To construct sub-complete probability distributions using operator circuits we now introduce \textit{noisy positive unital circuits} or \noisepuncs.

\begin{definition}
	\label{def:noisepunc}
	A \noisepunc $\Qcircuit(\xvars)$ is of the form
	\begin{align}
		\Qcircuit(\xvars)= \qcircuit(\xvars) \Ocircuit(\xvars),
	\end{align}
	where $\Ocircuit(\xvars)$ is a \punc and $\qcircuit(\xvars)$ is real-valued and belongs to the  $[0,1]$ interval for every $\xvars$.
\end{definition}

\begin{restatable}{proposition}{propnoisepunddist}
	\label{prop:noisepunddist}
	\noisepuncs induce sub-complete probability distributions.
\end{restatable}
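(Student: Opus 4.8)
The plan is to show that a \noisepunc $\Qcircuit(\xvars) = \qcircuit(\xvars)\Ocircuit(\xvars)$, together with a density matrix $\rho$, induces a function $p(\xvars) = \Tr[\Qcircuit(\xvars)\rho]$ that is nonnegative and sums to at most $1$ over $\Omega(\Xvars)$. The nonnegativity is immediate: $\Ocircuit(\xvars)$ is PSD (Proposition~\ref{prop:pocpsd}), so $\Tr[\Ocircuit(\xvars)\rho]\geq 0$ by the argument in the proof of Proposition~\ref{prop:povmprob}, and $\qcircuit(\xvars)\in[0,1]$ is a nonnegative scalar, so the product is nonnegative. The real content is the sub-normalization.

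For the sub-normalization, I would first observe that $\Qcircuit(\xvars) = \qcircuit(\xvars)\Ocircuit(\xvars)$ is itself PSD for every $\xvars$ (a nonnegative scalar times a PSD matrix), so the family $\{\Qcircuit(\xvars)\}_{\xvars\in\Omega(\Xvars)}$ is a family of PSD matrices. The key step is then to bound the sum $M \coloneqq \sum_{\xvars\in\Omega(\Xvars)} \Qcircuit(\xvars)$ in the Loewner order. Since $0\leq \qcircuit(\xvars)\leq 1$ and $\Ocircuit(\xvars)$ is PSD, we have $\Qcircuit(\xvars) = \qcircuit(\xvars)\Ocircuit(\xvars) \preceq \Ocircuit(\xvars)$ for each $\xvars$. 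Summing over $\xvars$ and using Proposition~\ref{prop:puncPOVM}, which tells us that $\sum_{\xvars\in\Omega(\Xvars)}\Ocircuit(\xvars) = \mathbb{1}$, gives
\begin{align}
	M = \sum_{\xvars\in\Omega(\Xvars)} \qcircuit(\xvars)\Ocircuit(\xvars)
	\preceq \sum_{\xvars\in\Omega(\Xvars)} \Ocircuit(\xvars) = \mathbb{1}.
\end{align}
Hence $\{\Qcircuit(\xvars)\}_{\xvars}$ is a noisy POVM in the sense of Definition~\ref{def:noisypovm}, and Proposition~\ref{prop:noisypovmdist} directly yields that $p(\xvars) = \Tr[\Qcircuit(\xvars)\rho]$ is a sub-complete probability distribution. (For the "strict" versus "complete" distinction one would just note that equality $M = \mathbb{1}$ forces $\qcircuit(\xvars) = 1$ on the support, recovering the ordinary \punc case.)

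The main obstacle—and it is mild—is justifying the termwise Loewner inequality $\qcircuit(\xvars)\Ocircuit(\xvars)\preceq\Ocircuit(\xvars)$, i.e. that $(1-\qcircuit(\xvars))\Ocircuit(\xvars)$ is PSD; this holds because $1-\qcircuit(\xvars)\geq 0$ and a nonnegative scalar multiple of a PSD matrix is PSD, and then summing preserves the Loewner order since the cone of PSD matrices is closed under addition. One subtlety worth a sentence is that $\qcircuit(\xvars)$ here is understood as the value taken by a (monotone probabilistic) circuit $\qcircuit$ at input $\xvars$, so the bound $\qcircuit(\xvars)\in[0,1]$ is exactly the hypothesis of Definition~\ref{def:noisepunc}; no further structure of $\qcircuit$ is needed for this statement, only its range. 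I would therefore keep the proof to essentially the three lines above plus a pointer to Propositions~\ref{prop:povmprob},~\ref{prop:puncPOVM}, and~\ref{prop:noisypovmdist}.
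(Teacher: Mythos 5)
Your proof is correct, but it takes a different route from the paper's. The paper works entirely at the scalar level: it pulls the scalar $\qcircuit(\xvars)$ out of the trace to write $\pi_\Xvars(\xvars) = \qcircuit(\xvars)\,\Tr[\Ocircuit(\xvars)\rho] = \qcircuit(\xvars)\,p_\Xvars(\xvars)$, notes that $p_\Xvars$ is a complete probability distribution taking values in $[0,1]$, and concludes that weighting each term of $\sum_{\xvars} p_\Xvars(\xvars) = 1$ by a scalar in $[0,1]$ preserves nonnegativity and can only decrease the sum. You instead work at the operator level: you establish the termwise Loewner bound $\qcircuit(\xvars)\Ocircuit(\xvars) \preceq \Ocircuit(\xvars)$, sum it against Proposition~\ref{prop:puncPOVM} to get $\sum_{\xvars}\Qcircuit(\xvars) \preceq \mathbb{1}$, and then invoke Proposition~\ref{prop:noisypovmdist}. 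Both arguments are sound and short. Your version buys something the paper's does not state: it shows that the family $\{\Qcircuit(\xvars)\}_{\xvars}$ is itself a noisy POVM in the sense of Definition~\ref{def:noisypovm}, which ties \noisepuncs directly to the quantum-information concept introduced just before and actually puts Proposition~\ref{prop:noisypovmdist} to use (the paper states that proposition but then re-derives sub-completeness from scratch with scalars). The paper's version is the more elementary of the two, needing no reasoning about the PSD cone or the Loewner order. Your parenthetical on the strict-versus-complete case is a harmless aside; the core argument stands without it.
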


\begin{proof}
	Let $\Xvars$ be a set of random variables and  $\Qcircuit(\xvars)= \qcircuit(\xvars)\Ocircuit(\xvars)$ with $\xvars\in \Omega(\Xvars)$ be a \noisepunc inducing a probability distribution
	\begin{align}
		\pi_\Xvars(\xvars) = \Tr [ \qcircuit(\xvars)\Ocircuit(\xvars) \rho].
	\end{align}
	As $\qcircuit(\xvars)$ is a scalar we can write
	\begin{align}
		\pi_\Xvars(\xvars) = \qcircuit(\xvars) \Tr [ \Ocircuit(\xvars) \rho] = \qcircuit(\xvars) p_X(\xvars),
	\end{align}
	where $p_X(\xvars)$ is the (complete) probability distribution induced by the set $\{ \Ocircuit(\xvars) \}_{\xvars \in \Omega(\Xvars)}$. As $0\leq q(\xvars)\leq 1$ (by definition) and $0\leq p_\Xvars(\xvars)\leq 1$ we have also that $0\leq \pi_\Xvars(\xvars)\leq 1$. This shows that each event $\xvars\in \Omega(\Xvars)$ has a positive probability.

	For the sub-completeness we observe that the sum of positive terms
	\begin{align}
		\sum_{\xvars \in \Omega(\Xvars)} p_\Xvars(\xvars)=1.
	\end{align}
	If we weigh each term in the sum with a scalar $q(\xvars) \in [0,1]$ we immediately conclude that $\sum_{\xvars \in \Omega(\Xvars)} \pi_\Xvars (\xvars)\leq$ and that we have indeed a sub-complete probability distribution.
\end{proof}

\subsection{Noise Variables}

All of this is a bit unnerving. How is it possible that we have probabilistic events $\xvars\in \Omega(\Xvars)$ that violate the second of Kolmogorov's probability axioms (by having sub-complete distributions)? The resolution to this problem lies in interpreting sub-complete distributions as joint distributions not only over the set of variables $\Xvars$ but an extra set of variables $\Yvars$ representing extra noise~\citep{wiseman2009quantum}
\begin{align}
	\pi_{\Xvars}(\xvars) = p_{\Xvars \Yvars}(\xvars, \yvars),
\end{align}
for which we indeed have
\begin{align}
	\sum_{\xvars\in \Omega(\Xvars)} \sum_{\yvars\in \Omega(\Yvars)}  p_{\Xvars \Yvars}(\xvars, \yvars) =1.
\end{align}
The problem with the random variables $\Yvars$ is that they are not accessible to the observer. In the sense that they are observable in principle but in practice not, \eg a defect in a sensor.\footnote{Note these practically unobservable variables should not be confused with the concept of (local) hidden variables, which lead to violations of Bell's inequality.}
However, the probability we are actually interested in is the one for $\Xvars$ given $\Yvars$
\begin{align}
	p_{\Xvars|\Yvars}(\xvars\mid \yvars )
	 & =
	\frac{
	p_{\Xvars \Yvars}(\xvars,  \yvars )
	}
	{
	\sum_{\xvars\in \Omega(\Xvars)}p_{\Xvars \Yvars}(\xvars, \yvars )
	}
	\nonumber
	\\
	 & = \frac{\pi_{\Xvars}(\xvars)}{\sum_{\xvars\in \Omega(\Xvars)}  \pi_{\Xvars}(\xvars)}
\end{align}
The question now became whether we can choose $\qcircuit(\xvars)$ such that the summation in the denominator is tractable, \ie can be performed in time polynomial in the number of random variables $\Xvars$.

In the probabilistic circuit literature it has been shown that the product of two circuits is representable as a single circuit within polytime and polyspace if the two circuits are compatible~\citep{khosravi2019tractable,vergari2021compositional}. Importantly, this single circuit then allows for summing out the variables $\xvars$.
As pointed out also by \citet{loconte2025sum}, all we need is that $\qcircuit(\xvars)$ and $\Ocircuit(\xvars)$ are both structured decomposable according to the same partition tree (or variable tree). Furthermore, we pick $\qcircuit(\xvars)$ to be a probabilistic circuit and $\Ocircuit(\xvars)$ to be a \punc.
For the sake of simplicity we construct the circuit $\qcircuit(\xvars)$ (with an underlying partition circuit) using
Hadamard products instead of Kronecker products.
\begin{align}
	{\qcircuit}_k(\xvars_k)=
	\begin{cases}
		f_{\xvar_k},
		 & \text{if $k$ leaf}
		\\
		W_k \times \Bigl( \qcircuit_{k_l}(\xvars_{k_l}) \odot  \qcircuit_{k_r} (\xvars_{k_r}) \Bigr)
		 & \text{else}.
	\end{cases}
	\nonumber
\end{align}

\subsection{\noisepuncs as Deep Operator Mixtures}

An interesting consequence of modelling noise in the system via a probabilistic circuit $\qcircuit(\xvars)$ is that it effectively results in a model that can be interpreted as a recursive mixture of positive operators. To see this, let us first write the matrix vector product in the computation units of $\qcircuit(\xvars)$ using explicit indices:
\begin{talign}
	\qcircuit_k
	= W_k \times (\qcircuit_{k_l} \odot \qcircuit_{k_r})
	\Leftrightarrow  \qcircuit_{ki} = \sum_j w_{kij}  \qcircuit_{k_l j}  \qcircuit_{k_r j}.
\end{talign}
Here $\qcircuit_{ki}$ denotes the $i$-th entry of the vector $\qcircuit_{k}$ and $w_{kij}$ denotes the elements of the matrix $W_k$.
We also dropped the explicit dependency on $\xvars_k$.

We now multiply each $\qcircuit_{ki}\in [0,1]$ with the corresponding operator $\Ocircuit_k$
\begin{talign}
	\qcircuit_{ki} \Ocircuit_k
	& = \left(\sum_j w_{kij}  \qcircuit_{k_l j}  \qcircuit_{k_r j} \right)  \qop_k( \Ocircuit_{k_l} \otimes \Ocircuit_{k_r} )
	\nonumber
	\\
	& =\sum_j w_{kij}  \underbrace{\qop_k(  \qcircuit_{k_l j} \Ocircuit_{k_l} \otimes  \qcircuit_{k_r j} \Ocircuit_{k_r} )}_{=\tildeQcircuit_{kj}}
	\label{eq:deepoperatirmix}
\end{talign}
This means that we have for each computation unit $k$ not one but multiple operators, indexed by $i$, and each of the  $i$ operators is a mixture of operators:
\begin{talign}
	\Qcircuit_{ki}(\xvars_k) =\sum_j w_{kij} \tildeQcircuit_{kj}(\xvars_k).
	\label{eq:operatormix}
\end{talign}
We illustrate the situation in Figure~\ref{fig:operatormixture}. Such mixtures of operators can be regarded as generalizing standard mixture models of probability distributions, which can be recovered by considering the special case of operators of dimension one -- in other words positive real-valued scalars. With this interpretation we can follow a similar reasoning to viewing probabilistic circuits as \puncs closed over diagonal matrices. But now we view \noisepuncs as \puncs closed over block-diagonal matrices.

\begin{figure}[t]
	\centering
	\resizebox{0.4\columnwidth}{!}{
	\centering

	\begin{tikzpicture}[node distance=2cm, auto]

		\node[draw, circle, minimum size=0.5cm,inner sep=1pt, line width=1.5pt] (A1) at (0, 2) {$+$};
		\node[draw, circle, minimum size=0.5cm,inner sep=1pt] (A2) at (2, 2) {$+$};
		\node[draw, circle, minimum size=0.5cm,inner sep=1pt] (A3) at (4, 2) {$+$};
		\node[draw, circle, minimum size=0.5cm] (B1) at (0, 0) {};
		\node[draw, circle, minimum size=0.5cm] (B2) at (2, 0) {};
		\node[draw, circle, minimum size=0.5cm] (B3) at (4, 0) {};

		\node at (0, 2.5) {$\Qcircuit_{k0}$};
		\node at (2, 2.5) {$\Qcircuit_{k1}$};
		\node at (4, 2.5) {$\Qcircuit_{k2}$};

		\node at (0, -0.5) {$\tildeQcircuit_{k0}$};
		\node at (2, -0.5) {$\tildeQcircuit_{k1}$};
		\node at (4, -0.5) {$\tildeQcircuit_{k2}$};

		\foreach \i in {1,2,3}{
				\foreach \j in {1,2,3}{
						\draw (A\i) -- (B\j);
					}
			}

		\draw[line width=1.5pt] (A1) -- (B1) node[midway, left]{$w_{k00}$};
		\draw[line width=1.5pt] (A1) -- (B2) node[midway, left] {$w_{k01}$};
		\draw[line width=1.5pt] (A1) -- (B3) node[midway, left] {$w_{k02}$};

	\end{tikzpicture}
}
	\caption{
	Graphical representation of operator mixing as described in Equation~\ref{eq:operatormix}. The operators $\Qcircuit_{k0}$, $\Qcircuit_{k1}$, and $\Qcircuit_{k2}$ are all PSD matrices and constructed using the operators  $\tildeQcircuit_{k0}$, $\tildeQcircuit_{k1}$, and $\tildeQcircuit_{k2}$ using weighted sums. For $\Qcircuit_{k0}$ we also indicate the mixing weights $w_{k00}$, $w_{k01}$, and $w_{k02}$, which are positive real-valued scalars and satisfy $w_{k00}+ w_{k01}+w_{k02}=1$.
	}
	\label{fig:operatormixture}
\end{figure}
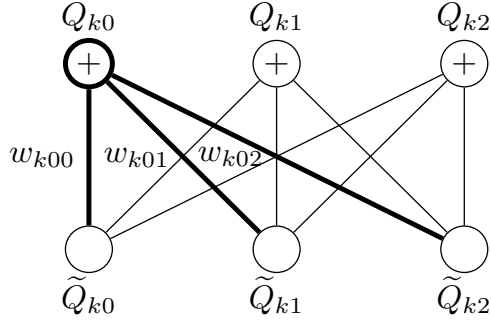

\section{Proofs for Section \ref{sec:nsdnmcircuit}}

\subsection{Proof for Theorem \ref{theo:properprobdpunc}}
\label{sec:proof:theo:properpropdpunc}

\theoproperprobdpunc*

\begin{proof}
	The proof is relatively straightforward as positivity is guaranteed by the fact that we perform computations on members of POVMs that all preserve positive semi-deifiniteness (Kronecker product in the product units and convex combination of quantum operations in the sum units). This guarantees that $\ocircuit(\xvars)$ is PSD and hence $ \Tr [\ocircuit(\xvars) \rho]>0$. The argument is similar for the completeness of $\Tr [\ocircuit(\xvars) \rho]$ as here we need to show that $\sum_{\xvars \in \Omega(\Xvars)} \ocircuit(\xvars) =\mathbb{1}$. For this we push down the summations to the corresponding leaves where we obtain unit matrices. The circuit then performs again computation with these unit matrices that are all unital, which guarantees indeed that
	\begin{align}
		\sum_{\xvars \in \Omega(\Xvars)} \Tr [\ocircuit(\xvars) \rho]
		 & =
		\Tr \left[
			\left(
			\sum_{\xvars \in \Omega(\Xvars)}\ocircuit(\xvars)
			\right)
			\rho
			\right]
		\\
		 & =
		\Tr \left[  \mathbb{1} \rho\right]
		\nonumber
		\\
		 & =1
		\nonumber
	\end{align}
\end{proof}

\subsection{Proof of Proposition \ref{prop:sdpuncsubsetdpunc}}
\label{sec:proof:prop:sdpuncsubsetdpunc}

\propsdpuncsubsetdpunc*

\begin{proof}
	Consider the expression of a sum unit in a \dpunc
	\begin{align}
		\sum_{j\in\inputs(k)} \weight_{kj} \qop_{kj} (\ocircuit_j(\xvars_j) ).
	\end{align}
	Writing also the operation that is performed in the product units we obtain:
	\begin{align}
		\sum_{j\in\inputs(k)} \weight_{kj} \qop_{kj} \Bigl(\ocircuit_{j_l}(\xvars_{j_l}) \otimes \ocircuit_{j_r}(\xvars_{j_r})   \Bigr).
	\end{align}
	Given that all product units decompose in the same way we can write this as
	\begin{align}
		\sum_{j\in\inputs(k)} \weight_{kj} \qop_{kj} \Bigl(\ocircuit_{j_l}(\xvars_{k_l}) \otimes \ocircuit_{j_r}(\xvars_{k_r})   \Bigr),
	\end{align}
	where the index on $\xvars_j$, $\xvars_{j_l}$, and $\xvars_{j_r}$ is now on $k$ and not on $j$. Given that the expression above is a completely positive map \citep{nielsen2001quantum} that maps positive semi-definite matrices to positive semi-definite matrices, we can write the expression using a single quantum operation:
	\begin{align}
		\qop_{k} \Bigl(\ocircuit_{k_l}(\xvars_{k_l}) \otimes \ocircuit_{k_r}(\xvars_{k_r})   \Bigr).
	\end{align}
	Following the convention from Section~\ref{sec:puncs} and using upper case letter for the circuit we have:
	\begin{align}
		\qop_{k} \Bigl(\Ocircuit_{k_l}(\xvars_{k_l}) \otimes \Ocircuit_{k_r}(\xvars_{k_r})   \Bigr),
	\end{align}
	which corresponds exactly to the computations performed in an \sdpunc and thereby concludes the proof.
\end{proof}

\end{document}